\documentclass[11pt,twoside]{article}



\newcommand{\otp}[3]{\ensuremath{[#1]\,#2\,[#3]}}

\newcommand{\pst}[1]{\ensuremath{\mathit{post}(#1)}}

\newcommand{\PrP}{\ensuremath{\alpha}\xspace}   
\newcommand{\PsP}{\ensuremath{\beta}\xspace}    

\newcommand{\Prg}{\ensuremath{P}\xspace}      

\newcommand{\PrPR}{\ensuremath{\alpha}\xspace}  
\newcommand{\PrR}{\ensuremath{P}\xspace}    
\newcommand{\PsPr}{\ensuremath{\beta}\xspace}    



\newcommand{\StO}{\ensuremath{\varphi}\xspace}   
\newcommand{\StT}{\ensuremath{\psi}\xspace}


\newcommand{\reg}[1]{\ensuremath{\mathtt{#1}}}   
\newcommand{\spec}[1]{\ensuremath{\red{\mathit{#1}}}}        

\newcommand{\PstPT}[1]{\ensuremath{R}\xspace}    

\newcommand{\asg}{\colonequals}
\newcommand{\nin}{\not\in}
\newcommand{\rng}[2]{\ensuremath{[#1:#2]}}



\newcommand{\gSt}{s} 
\newcommand{\inSt}{s} 
\newcommand{\fnSt}{t}     

\newcommand{\behs}[1]{\ensuremath{ \{\hspace{-0.2em}| #1 |\hspace{-0.2em}\} }}


\newcommand{\seqAx}{(\textit{Sequence Axiom})\xspace}

\newcommand{\emptPre}{(\textit{Empty Pre-program})\xspace}

\newcommand{\emptPrg}{(\textit{Empty Program})\xspace}

\newcommand{\trading}{(\textit{Trading})\xspace}

\newcommand{\appRule}{(\textit{Append})\xspace}

\newcommand{\eqSubRule}{(\textit{Substitution})\xspace}

\newcommand{\Peqv}{\equiv}  

\newcommand{\whileLaw}{(\textit{While})\xspace}

\newcommand{\whileLawC}{(\textit{While Consequence})\xspace}

\newcommand{\ifLaw}{(\textit{If})\xspace}

\newcommand{\ifLawO}{(\textit{One-way If})\xspace}

\newcommand{\seqLaw}{(\textit{Sequential Composition})\xspace}

\newcommand{\preRl}{(\textit{Pre-program Strengthening})\xspace}

\newcommand{\postRl}{(\textit{Post-program Weakening})\xspace}
\newcommand{\postRlNb}{\textit{Post-program Weakening}\xspace}

\newcommand{\eqNonR}{\textit{Equiv Nonrecursive}\xspace}
\newcommand{\eqNonRV}{\textit{Equiv Nonrecursive Void}\xspace}
\newcommand{\eqR}{\textit{Equiv Recursive}\xspace}


\newcommand{\PeqB}{\cong}    

\newcommand{\Pord}{\preceq}    
\newcommand{\PRef}{\sqsubseteq}    




\setlength{\parindent}{1.5em}
\setlength{\parskip}{0.5\baselineskip}

\hoffset        = -1.0 in
\voffset        = -1.0 in


\paperheight    = 297 mm
\paperwidth     = 210 mm

\topmargin      = 1.0 in
\headheight     = \baselineskip
\headsep        = \baselineskip
\footskip       = 0.5 in
\textheight     = \paperheight
\addtolength{\textheight}{-\topmargin}
\addtolength{\textheight}{-\headheight}
\addtolength{\textheight}{-\headsep}
\addtolength{\textheight}{-\footskip}
\addtolength{\textheight}{-0.5in}   

\textwidth      = \paperwidth
\oddsidemargin  = 1.0 in
\evensidemargin = 1.0 in
\addtolength{\textwidth}{-\oddsidemargin}
\addtolength{\textwidth}{-\evensidemargin}

\linewidth      = \textwidth

\input{TX/packages}


\newcommand{\bp}{\begin{proposition}}
\newcommand{\ep}{\end{proposition}}

\newcommand{\bco}{\begin{corollary}}
\newcommand{\eco}{\end{corollary}}

\newcommand{\bt}{\begin{theorem}}
\newcommand{\et}{\end{theorem}}

\newcommand{\bl}{\begin{lemma}}
\newcommand{\el}{\end{lemma}}

\newcommand{\bd}{\begin{definition}}
\newcommand{\ed}{\end{definition}}

\newcommand{\bex}{\begin{exmple}}
\newcommand{\eex}{\end{exmple}}

\newcommand{\bpr}{\vspace{-1.0ex} \begin{proof}}
\newcommand{\epr}{\end{proof}}

\newcommand{\bexer}{\begin{exercise}}
\newcommand{\eexer}{\end{exercise}}

\newcommand{\br}{\begin{remark}}
\newcommand{\er}{\end{remark}}



\newtheorem{theorem}{Theorem}
\newtheorem{lemma}[theorem]{Lemma}
\newtheorem{corollary}[theorem]{Corollary}
\newtheorem{proposition}[theorem]{Proposition}

\newtheorem{example}{Example}
\newenvironment{exmple}{\begin{example} \em}{\end{example}}

\newtheorem{exrcise}{Exercise}
\newenvironment{exercise}{\begin{exrcise} \em}{\end{exrcise}}

\newtheorem{remark}{Remark}
\newtheorem{definition}{Definition}


\newenvironment{proof}{\textbf{Proof.} } 
                      {\hfill{$\Box$}}

\def\squarebox#1{\hbox to #1{\hfill\vbox to #1{\vfill}}}





\newenvironment{proofsk}{\vspace{1ex}\textit{Proof sketch.}} {\qed}

\def\squarebox#1{\hbox to #1{\hfill\vbox to #1{\vfill}}}

\usepackage{calc}


\newcommand{\beqn}{\begin{centeqn}}
\newcommand{\eeqn}{\end{centeqn}}

\newcommand{\bleqn}[1]{\begin{centlabeqn}{#1}}
  \newcommand{\eleqn}{\end{centlabeqn}}


\newenvironment{centeqn}	
   {{\ss\\ \hspace*{\fill}}} 
   {\hspace*{\fill}\ss\\}


\newsavebox{\EqnLabel}
\newenvironment{centlabeqn}[1]	
   {\sbox{\EqnLabel}{#1}
    {\bs \hspace*{\fill}}
   } 
   {\hfill{\makebox[0in][r]{\usebox{\EqnLabel}}}\bs\\}

   {\sbox{\EqnLabel}{#1}
    {\bs \hspace*{\fill}}
   } 
   {\hfill{\makebox[0in][r]{\usebox{\EqnLabel}}}\bs\\}





\newenvironment{centeqn-nbsp} 
   {{\ms\\ \hspace*{\fill}}}
   {\hspace*{\fill}}

\newcommand{\be}{\begin{itemize}}
\newcommand{\ee}{\end{itemize}}
\newcommand{\bdn}{\begin{description}}
\newcommand{\edn}{\end{description}}
\newcommand{\bn}{\begin{enumerate}}
\newcommand{\en}{\end{enumerate}}

{\end{list}}


\newcommand{\remove}[1]{}




\newcommand{\bc}{\begin{center}}
\newcommand{\ec}{\end{center}}
\newcommand{\ul}{\underline}
\newcommand{\bs}{\bigskip}

\newcommand{\bfg}{\begin{figure}}
\newcommand{\efg}{\end{figure}}

\renewcommand{\ss}{\smallskip}

\newcommand{\eg}{e.g.,\xspace}
\newcommand{\ie}{i.e.,\xspace}
\newcommand{\wrt}{w.r.t.,\xspace}

\newcommand{\intr}{\empi}
\newcommand{\intrdef}{\emph}

\newcommand{\empi}[1]{\textit{#1}}

\newcommand{\dfn}[1]{\textbf{\textit{#1}}}


\newcommand{\ms}[1]{%
        \relax\ifmmode
                \mathord{\mathcode`\-="702D\it #1\mathcode`\-="2200}%
        \else
                $\mathord{\mathcode`\-="702D\it #1\mathcode`\-="2200}$%
        \fi
}


\newcommand{\ar}{\rightarrow}  

\renewcommand{\b}[1]{\overline{#1}}

\newcommand{\df}{\triangleq}


\renewcommand{\ge}{\geqslant}


\newcommand{\ints}{\cap}
\renewcommand{\l}{\ell}
\newcommand{\la}[1]{\mbox{$\, \stackrel{#1}{\rightarrow} \,$}}

\renewcommand{\le}{\leqslant}

\newcommand{\set}[1]{\ensuremath{\left\{ #1 \right\}}}

\newcommand{\stt}{\ensuremath{\ |\ }}
\newcommand{\sub}{\subseteq}


\newcommand{\un}{\cup}





\newcommand{\ang}[1]{\ifmmode{\left\langle #1 \right\rangle}
   \else{$\left\langle${#1}$\right\rangle$}\fi}


\newcommand{\sat}{\models}

\newcommand{\yld}{\vdash}



\newcommand{\true}{\ensuremath{\mathit{tt}}}
\newcommand{\false}{\ensuremath{\mathit{ff}}}





\newcommand{\ev}{\equiv}






\newcommand{\ex}{\ensuremath{\exists\,}}


\newcommand{\MIN}{\ensuremath{\mathrm{MIN}\,}}







\newcounter{lctr}
\newcommand{\reset}{\setcounter{lctr}{0}}

\newcommand{\lio}[1]{\addtocounter{lctr}{1}\>\ensuremath{\mathit{#1}}\\}
\newcommand{\lit}[1]{\addtocounter{lctr}{1}\>\>\ensuremath{\mathit{#1}}\\}
\newcommand{\lih}[1]{\addtocounter{lctr}{1}\>\>\>\ensuremath{#1}\\}



\newcommand{\lion}[1]{\addtocounter{lctr}{1}\>\ensuremath{#1}}
\newcommand{\litn}[1]{\addtocounter{lctr}{1}\>\>\ensuremath{#1}}


\newcommand{\pcomnt}[1]{\`// #1}
\newcommand{\lioc}[2]{\addtocounter{lctr}{1}\>\ensuremath{#1} \pcomnt{#2}\\}





\definecolor{dkgreen}{rgb}{0,0.6,0}
\definecolor{gray}{rgb}{0.5,0.5,0.5}
\definecolor{gray97}{gray}{.97}
\definecolor{gray45}{gray}{.45}
\definecolor{gray75}{gray}{.75}
\definecolor{mauve}{rgb}{0.58,0,0.82}
\definecolor{blueb}{rgb}{0,0,1}
\definecolor{redb}{rgb}{0.5,0,0}
\definecolor{darkblue}{rgb}{0.0,0.0,0.6}
\definecolor{cyan}{rgb}{0.0,0.6,0.6}

\newcommand{\red}[1]{\textcolor{red}{#1}}



\newcommand{\ttskp}{\ensuremath{\mathtt{skip}}}


\newcommand{\pseudocode}[1]{\ensuremath{{\boldsymbol{\mathit{#1}}}}}

\newcommand{\IFC}[1]{\pseudocode{if}\ (\ensuremath{#1})\ }
\newcommand{\IF}{\pseudocode{if}\ }
\newcommand{\THEN}{\pseudocode{then}\ }
\newcommand{\ELSE}{\pseudocode{else}\ }

\newcommand{\FI}{\pseudocode{fi}}

\newcommand{\WHILE}{\pseudocode{while}\ }
\newcommand{\WHILEC}[1]{\pseudocode{while}\ (\ensuremath{#1})\ }
\newcommand{\ELIHW}{\pseudocode{elihw}}

\newcommand{\FOR}{\pseudocode{for}\ }
\newcommand{\FORC}[1]{\pseudocode{for}\ (\ensuremath{#1})\ }
\newcommand{\ROF}{\pseudocode{rof}}
\newcommand{\TO}{\ \pseudocode{to}\ }
\newcommand{\DOWNTO}{\ \pseudocode{downto\ }}

\newcommand{\NODE}{\pseudocode{Node}}
\newcommand{\NEW}{\pseudocode{new}}
\newcommand{\RET}[1]{\pseudocode{return}(\ensuremath{#1})}



\newcommand{\ttcode}[1]{\ensuremath{\mathtt{{#1}}}}

\newcommand{\ttIFC}[1]{\ttcode{if}\ (\ensuremath{#1})}

\newcommand{\ttELSE}{\ttcode{else}\ }

\newcommand{\ttELSFC}[1]{\ttcode{else\ if}\ (\ensuremath{#1})\ }

\newcommand{\ttWHILEC}[1]{\ttcode{while}\ (\ensuremath{#1})\ }
\newcommand{\ttELIHW}{\ttcode{elihw}}

\newcommand{\ttFORC}[1]{\ttcode{for}\ (\ensuremath{#1})\ }
\newcommand{\ttROF}{\ \ttcode{rof}}

\newcommand{\ttNEW}{\ttcode{new}}
\newcommand{\ttRET}[1]{\ttcode{return}(\ensuremath{#1})}


\newcommand{\skp}{\ensuremath{\mathit{skip}}}   

\renewcommand{\sp}{\ |\ }

\newcommand{\Array}{\mathit{Array}}
\newcommand{\Arith}{\mathit{Arith}}
\newcommand{\Assgn}{\mathit{Assign}}
\newcommand{\Bool}{\mathit{Bool}}
\newcommand{\Class}{\mathit{Class}}
\newcommand{\ClassDef}{\mathit{ClassDef}}
\newcommand{\Num}{\mathit{Num}}
\newcommand{\Obj}{\mathit{Obj}}
\newcommand{\Expr}{\mathit{Expr}}
\newcommand{\EList}{\mathit{Expr\_List}}
\newcommand{\Ident}{\mathit{Id}}
\newcommand{\Iter}{\mathit{Iteration}}
\newcommand{\OCreate}{\mathit{Object\_Create}}

\newcommand{\Primitive}{\mathit{Primitive}}
\newcommand{\PList}{\mathit{Parameter\_List}}
\newcommand{\ProcList}{\mathit{Proc\_List}}
\newcommand{\ProcDef}{\mathit{Proc\_Def}}
\newcommand{\ProcCall}{\mathit{Proc\_Call}}

\newcommand{\VDec}{\mathit{Var\_Decl}}
\newcommand{\VList}{\mathit{Var\_List}}

\newcommand{\Stat}{\mathit{Stat}}
\newcommand{\Sel}{\mathit{Selection}}
\newcommand{\Type}{\mathit{Type}}

\newcommand{\Reference}{\mathit{Reference}}
\newcommand{\vod}{\mathit{void}}

\newcommand{\Ret}{\mathit{Ret}}
\newcommand{\Return}{\mathit{Return}}



















\newcommand{\htp}[3]{\ensuremath{\{#1\}\,#2\,\{#3\}}}


\newcommand{\chns}{\ensuremath{[\hspace{-0.1ex}]}}
\newcommand{\ch}{\mbox{[\hspace{-0.1ex}]}}

\newcommand{\IDENT}[1]{\ensuremath{\mathit{#1}}\xspace}


\newcommand{\nil}{\mbox{\textsc{nil}}\xspace}


\newcommand{\key}{\IDENT{key}}





\newcommand{\Reals}{\mathit{\mathbf{R}}}








  









\renewcommand{\b}[1]{\overline{#1}}

\newlength{\tablesepp}
\setlength{\tablesepp}{4.0ex}

\begin{document}

\begin{titlepage}

\begin{center}

\vspace*{1.0in}

\Huge{Operational Annotations}

\Large{A new method for sequential program verification}

\vspace{0.5in}

\Large{Paul C. Attie}\\
School of Computer and Cyber Sciences\\
Augusta University
\vspace{1.0in}

\today
\vspace{0.75in}

\end{center}

\begin{abstract}

I present a new method for specifying and verifying the partial correctness of
sequential programs. The key observation is that, in Hoare logic, assertions are used as
selectors of states: an assertion specifies the set of program states which satisfy the assertion. 
Hence, the usual meaning of the partial correctness Hoare triple
\htp{f}{\Prg}{g}: if execution is started in \emph{any of the states that satisfy assertion $f$}, then,
upon termination, the resulting state will be \emph{some state that satisfies assertion $g$}.
There are of course other ways to specify a set of states.
I propose to replace assertions by terminating programs:
a program $\PrP$ specifies a set of states as follows: we start $\PrP$ in any state whatsoever,
and all the states that $\PrP$ may terminate in constitute the
specified set.
I call this set the \emph{post-states} of $\PrP$.
I introduce the \emph{operational triple}
\otp{\PrP}{\Prg}{\PsP} to mean: if execution of $\Prg$ is started in any post-state of $\PrP$, then upon termination,
the resulting state will be some post-state of $\PsP$.
Here, $\PrP$ is the \emph{pre-program}, and plays the role of a pre-condition, and
$\PsP$ is the \emph{post-program}, and plays the role of a post-condition.

\end{abstract}

Keywords: Program verification, Hoare logic

\end{titlepage}


\section{Introduction}

I present a system for verifying partial correctness of deterministic
sequential programs. In contrast to Floyd-Hoare logic \cite{Fl67,Ho69}, my system does not
use pre-conditions and post-conditions, but rather
\intr{pre-programs} and \intr{post-programs}.
An assertion is essentially a means for defining a set of states:
those for which the assertion evaluates to true.
Hence the usual Hoare triple
\htp{f}{\Prg}{g} means that if execution of $\Prg$ is started in \emph{any of the
  states that satisfy assertion $f$}, then,
upon termination of $\Prg$, the resulting state will be \emph{some state that
  satisfies assertion $g$}.

Another method of defining a set of states is with a program \PrP which
starts execution in any state, \ie with precondition $\mathit{true}$.
The set of states in which \PrP terminates (taken over all possible
starting states) constitues the set of states that \PrP defines.
I call these the \intr{post-states} of \PrP.

I introduce the \intr{operational triple}
\otp{\PrP}{\Prg}{\PsP}, in which \PrP and \PsP are terminating sequential
programs, and $\Prg$ is the sequential program that is being verified.
\PrP is the \emph{pre-program} and
\PsP is the \emph{post-program}.
The meaning of \otp{\PrP}{\Prg}{\PsP} is as follows.
Consider executions of \PrP that start in any state 
(\ie any assignment of values to the variables).
From the final state of all such executions, \Prg is executed.
Let \StO be the set of resulting final states.
That is, \StO results from executing \Prg from any post-state of \PrP.
Also, let \StT be the set of post-states of \PsP, \ie the set of states
that result from executing \PsP starting in any state.
Then,  \otp{\PrP}{\Prg}{\PsP} is defined to mean $\StO \sub \StT$.
That is, the post-states of $\PrP;\Prg$ are a subset of the
post-states of $\PsP$.

The advantages of my approach are as follows.
Since the pre-program \PrP and the program \Prg are constituted from the same elements, namely program
statements, it is easy to ``trade'' between the two, \ie to move a statement from the program to the
pre-program and vice-versa. This tactic is put to good use in in
deriving programs from operational specifications, and is illustrated
in the examples given in this paper.
Since the pre-program \PrP and post-program \PsP are not actually executed,
then can be written without concern for efficiency. In fact, they can
refer to any well defined expression, \eg $\delta[t]$ for the
shortest path distance from a designated source $s$ to node $t$.


\remove{
My approach also removes the need to represent correctness properties properly in a declarative
logic. This can be helpful when representing such properties is
difficult, \eg for pointer-based programs, where normal Hoare logic does not
work, and more complex separation logic must be used. 
I illustrate this below with an example of verifying a program that reverses a singly-linked list in
place.
}

\section{Technical preliminaries}
\label{sec:prelim}

\subsection{Syntax of the programming language}

I use a basic programming language consisting of standard primitive types  (integers, boolean etc), arrays, and reference types,
assignments, if statements, while loops, for loops, procedure definition and invocation, class
definition, object creation and referencing.  The syntax that I use is given by the BNF grammar in
Table~\ref{table:syntax}.
I omit the definitions for $\Ident$ (identifier), $\Num$
(numeral), $\Obj$ (object reference), as these are primarily lexical in nature.  I assume as given
the grammar classes $\Primitive$ for primitive types, and $\Reference$ for reference types, \ie my
syntax is parametrized on these definitions. 

My syntax is standard and self-explanatory. I also use $\chns$ to denote
non-deterministic choice between two commands \cite{LawsP}.
For integers $i, j$ with $i \le j$, I use $x \asg \rng{i}{j}$ as
syntactic sugar for $x \asg i \ch  \cdots \ch x \asg j$, \ie a random
assignment of a value in $i,\ldots,j$ to $x$. This plays the role of
the range assertion $i \le x \le j$ in Hoare logic.
I use $\true$ for true, and $\false$ for false. 

In the sequel, when I use the term ``program''; I will mean a statement $(\Stat)$
written in the language of Table~\ref{table:syntax}.
I use $\true$ for true, and $\false$ for false.

\begin{table}
$\Type ::= \Primitive\ \sp\ \Reference\ \sp\ \vod$\\
  %
%
%
%
$\Array ::= \Ident[\Expr]$ \hfill //array element --- boolean or integer valued\\
%
%
$\Bool ::= \true \sp \false \sp \Ident \sp \Bool \land \Bool \sp \Bool \lor \Bool \sp
\neg \Bool \sp \Arith < \Arith \sp \Arith = \Arith$ \hfill //Boolean expression\\
$\Arith ::= \Ident \sp \Num \sp \Arith \sp \Arith + \Arith \sp \Arith * \Arith \sp \Arith - \Arith \sp - \Arith $ \hfill //integer-valued arithmetic expression\\
$\Expr ::= \Bool \sp \Arith \sp \Obj$ \hfill //expression\\
$\VDec ::= \Type\ \Ident$\\
$\VList ::= \VDec \sp \VDec, \VList$\\
$\EList ::= \Expr \sp \Expr, \EList$\\
$\ProcDef ::= \Type\ \Ident(\VList)\ \Stat$\\
$\ProcCall ::= \Ident(\PList) \sp \Ident \asg \Ident(\PList)$\\
$\ProcList ::= \ProcDef \sp \ProcDef\ \ProcList$\\
$\ClassDef ::= \Class\ \Ident\ \{ \VList;\ \ProcList \}$\\
$\OCreate ::= \Type\ \Obj := new\ \Ident(\PList)$ \hfill //object creation\\
$\Assgn ::= \Ident \asg \Expr \sp \Array \asg \Expr \sp \Obj \asg \Expr \sp \Obj.\Ident \asg  \Expr$\\
$\Sel ::=  \IF (\Bool)\ \THEN \Stat\ \ELSE \Stat\ \FI \sp \IF (\Bool) \ \THEN \Stat\ \FI$\\
$\Ret ::= \Return \sp  \Return(\Expr)$\\
$\Iter ::= \WHILEC{\Bool} \Stat\ \ELIHW \sp \FORC{\Ident = \Arith \TO \Arith} \Stat \ROF$\\
$\Stat ::= \skp \sp \Stat; \Stat \sp \Stat \,\ch\, \Stat \sp \Assgn \sp \Sel \sp \Iter \sp \ProcCall
\sp \OCreate$ \hfill //Statement 
\caption{Syntax of the programming language}
\label{table:syntax}
\end{table}

\subsection{Semantics of the programming language}

I assume the usual semantics for reference
types: object identifiers are pointers to the object, and the identity of an object is given by its
location in memory, so that two objects are identical iff they occupy the same memory.  Parameter
passing is by value, but as usual a passed array/object reference allows the called procedure to
manipulate the original array/object.
My proof method relies on (1) the axioms and inference rules introduced in this paper, and (2) an
underlying method for establishing program equivalence. Any semantics in which the above are valid
can be used. 
For concreteness, I assume a standard small-step (SOS) operational semantics \cite{Pl04,Sc14}.


An \dfn{execution} of program $\Prg$ is a finite sequence $\gSt_0, \gSt_1, \ldots, \gSt_n$ of states
such that (1) $\gSt_i$ results from a single small step of $\Prg$ in state $\gSt_{i-1}$, for all $i
\in 1,\ldots,n$, (2) $\gSt_0$ is an initial state of $\Prg$, and (3) $\gSt_n$ is a final (terminating) state of $\Prg$.
A \dfn{behavior} of program $P$ is a pair of states $(\inSt, \fnSt)$
such that (1) $\gSt_0, \gSt_1, \ldots, \gSt_n$ is an execution of $P$,
$\inSt = \gSt_0$, and $\fnSt = \gSt_n$.
Write $\behs{P}$ for the set of behaviors of $P$.

\subsection{Hoare Logic}

I assume standard first-order logic, with the standard model of
arithmetic and standard Tarskian semantics. Hence I take the notation
$\gSt \sat f$, where $\gSt$ is a state and $f$ is a formula, to have
the usual meaning.

I use the standard notation for Hoare-logic partial correctness: $\htp{f}{P}{g}$
means that if execution of program $P$ starts from a state satisfying
formula $f$, then, if $P$ terminates, the resulting final state will
satisfy formula $g$. That is, for all $(\inSt, \fnSt) \in \behs{P}$, if $\inSt \sat f$ then $\fnSt
\sat g$.


\section{Operational annotations}

I use a terminating program to specify a set of states. There is no
constraint on the initial states, and the set of all possible final states is specified.
If any initialization of variables is required, this must be done
explicitly by the program.

\begin{definition}[Post-state set]
\label{def:postSt}
Let  $\Prg$ be a program.
Then $$\pst{\Prg}  \df \set{ \fnSt \stt (\ex \inSt : (\inSt, \fnSt)  \in \behs{P}) }$$
\end{definition}

That is, $\pst{\Prg}$ is the set of all possible final states of $\Prg$, given any initial state. 
If one increases the set of states in which a program can start execution, then the set of states in
which the program terminates is also possibly increased, and is certainly not decreased. That is,
the set of post-states is monotonic in the set of pre-states.  Since prefixing a program $\PrP$ with
another program $\gamma$ simply restricts the states in which $\PrP$
starts execution, I have the following.

\begin{proposition}
\label{prop:ordering}
$\gamma; \PrP \Pord \PrP$.
\end{proposition}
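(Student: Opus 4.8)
The plan is to unfold the definition of the ordering $\Pord$, reduce the claim to a set inclusion between post-state sets, and then establish that inclusion directly from the operational semantics of sequential composition.

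First I would recall that $\PrgQ \Pord \PrgR$ means $\pst{\PrgQ} \sub \pst{\PrgR}$, so the proposition becomes $\pst{\gamma;\PrP} \sub \pst{\PrP}$. To prove this, I would take an arbitrary $\fnSt \in \pst{\gamma;\PrP}$ and, using Definition~\ref{def:postSt}, fix an initial state $\inSt$ with $(\inSt,\fnSt) \in \behs{\gamma;\PrP}$.

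The key step is a decomposition observation for sequential composition: every behavior $(\inSt,\fnSt)$ of $\gamma;\PrP$ splits as $(\inSt,\gSt) \in \behs{\gamma}$ and $(\gSt,\fnSt) \in \behs{\PrP}$ for some intermediate state $\gSt$. This is immediate from the small-step (SOS) semantics assumed in Section~\ref{sec:prelim}: a terminating execution of $\gamma;\PrP$ first runs $\gamma$ to completion, reaching some state $\gSt$, after which control passes to $\PrP$, which runs from $\gSt$ to the final state $\fnSt$. Given such a $\gSt$, the pair $(\gSt,\fnSt) \in \behs{\PrP}$ is precisely a witness for $\fnSt \in \pst{\PrP}$ via the existential in Definition~\ref{def:postSt} (instantiating the bound $\inSt$ there by $\gSt$). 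Since $\fnSt$ was arbitrary, $\pst{\gamma;\PrP} \sub \pst{\PrP}$, which is the claim.

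There is essentially no obstacle here; the only point requiring a little care is stating the sequential-composition decomposition precisely, and noting that it does not require $\gamma$ to be terminating --- if $\gamma$ diverges from some starting state, that state simply contributes no behavior to $\gamma;\PrP$, which only makes the inclusion easier. I would also remark that the reverse inclusion generally fails, since prefixing by $\gamma$ can strictly shrink the set of reachable post-states; this is why the statement is stated as an inequality rather than an equivalence, consistent with the ``post-states are monotonic in pre-states'' intuition noted just before the proposition.
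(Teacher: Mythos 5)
Your proof is correct and follows essentially the same route as the paper's: both decompose a behavior of $\gamma;\PrP$ into a behavior of $\gamma$ followed by a behavior of $\PrP$ from the intermediate state, and then use that latter pair as the existential witness for membership in $\pst{\PrP}$. Your version merely spells out the SOS justification for the decomposition and adds side remarks (on divergence of $\gamma$ and on the failure of the reverse inclusion) that the paper leaves implicit.
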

\begin{proof}
Let $s \in \pst{\gamma;\PrP}$. Then, there is some state $u$ such that
$u \in \pst{\gamma}$ and $(u,s) \in \behs{\PrP}$. 
Hence $s \in \pst{\PrP}$. So $\pst{\gamma;\PrP} \sub \pst{\PrP}$, from which 
$\gamma; \PrP \Pord \PrP$ follows.
\end{proof}

The central definition of the paper is that of \intrdef{operational triple} \otp{\PrP}{\Prg}{\PsP}:

\begin{definition}[Operational triple]
\label{def:opTriple}
Let  $\PrP$, $\Prg$, and $\PsP$ be programs.  
Then
\[
  \otp{\PrP}{\Prg}{\PsP} \;\df\; \pst{\PrP; \Prg} \sub \pst{\PsP}.
\]
\end{definition}

\subsection{Program Ordering and Equivalence}

The next section presents a deductive system for establishing validity
of operational triples. The rules of inference use three kinds of
hypotheses: (1) operational triples (over ``substatements'' as usual),
and (2) program ordering assertions $P \Pord Q$, and (3) program
equivalence assertions $P \Peqv Q$.

\begin{definition}[Program ordering]
\label{def:progOrd}  
$P \Pord Q \df \pst{P}  \sub \pst{Q}$. 
\end{definition}
Here $P$ is ``stronger'' than $Q$ since it has fewer post-states (\wrt the
pre-condition \true, \ie all possible pre-states), and so $P$ produces
an output which satisfies, in general, more constraints than the
output of $Q$.
Note that this is \emph{not} the same as the usual program refinement
relation, since the mapping from pre-states to post-states induced by
the execution of $P$ is not considered.
Also, the ``direction'' of the inclusion relation is reversed
w.r.t. the usual refinement ordering, where we write $Q \PRef P$ to
denote that ``$P$ refines $Q$'', \ie $P$ satisfies more specifications
than $Q$. This is in keeping with the importance of the post-state set
in the sequel.
Note that, by Definition~\ref{def:progOrd} , 
$\otp{\PrP}{\Prg}{\PsP} \;\df\; {\PrP; \Prg} \Pord \pst{\PsP}.$

\begin{definition}[Program equivalence, $\Peqv$]
\label{def:progEquiv}
Programs $P$ and $Q$ are \emph{equivalent} iff they have the same behaviors$:$
$P \Peqv Q \df \behs{P} = \behs{Q}$.
\end{definition}
That is, I take as program equivalence the equality of program behaviors. 
Note that equivalence is \emph{not} ordering in both directions. 
This discrepancy is because ordering is used for weakening/strengthening laws (and
so post-state inclusion is sufficient) while equivalence is used for
substitution, and so, for programs at least, equality of behaviors
is needed.

Any method for establishing program ordering and equivalence is
sufficient for my needs. The ordering and equivalence
proofs in this paper were informal, and based on obvious concepts such
as the commutativity of assignment statements that modify different
variables/objects.

Works that provide proof systems for program equivalence include
\cite{CLRR16,PBD00,LR15,CG99,PS93}.  Some of these are mechanized, and
some use bisimulation and circular reasoning. I will look into using
these works for formally establishing program equivalence hypotheses
needed in my examples (which are then akin to Hoare logic verification
conditions), and to adapting these systems to establish program
ordering, \eg replace bisimulation by simulation.

\section{A deductive system for operational annotations}

\begin{table}
\begin{mathpar}
\otp{\PrP}{\Prg}{\PrP;\Prg} \hfill {\seqAx}\\

\otp{\skp}{\Prg}{\Prg}  \hfill {\emptPre}\\

\otp{\PrP}{\skp}{\PrP} \hfill {\emptPrg}\\

\otp{\PrP}{\Prg1; \Prg2}{\PsP} \;\mathrm{iff}\; \otp{\PrP;\Prg1}{\Prg2}{\PsP} \hfill {\trading}\\

\inferrule{ \otp{\PrP}{\Prg}{\PsP} }
              { \otp{\PrP}{\Prg;\gamma}{\PsP;\gamma} } 
                    \hfill  {\appRule}\\

\inferrule{ \PrP \Peqv \PrP'\;\; \Prg \Peqv \Prg'\;\; \PsP \Peqv \PsP'\;\;  \otp{\PrP}{\Prg}{\PsP} }
              { \otp{\PrP'}{\Prg'}{\PsP'} }
                         \hfill {\eqSubRule}\\

\inferrule{ \PrP \Pord \PrP' \;\;\; \otp{\PrP'}{\Prg}{\PsP} }
              { \otp{\PrP}{\Prg}{\PsP} } 
                        \hfill  {\preRl}\\

\inferrule{ \otp{\PrP}{\Prg}{\PsP'} \;\;\;  \PsP' \Pord \PsP }
              { \otp{\PrP}{\Prg}{\PsP} } 
                     \hfill  {\postRl}\\

\inferrule{ \otp{\PrP}{\Prg1}{\PsP}\;\;\;  \otp{\PsP}{\Prg2}{\gamma} }
              { \otp{\PrP}{\Prg1;\Prg2}{\gamma} } 
                   \hfill {\seqLaw}\\

\inferrule{ \otp{\PrP'}{\Prg}{\PrP} }
              { \otp{\PrP}{\WHILEC{B} \Prg\ \ELIHW}{\PsP} } 
                   \; {\PrP' \PeqB (\PrP, B), \PsP \PeqB (\PrP, \neg B)}
                          \hfill {\whileLaw}\\

\inferrule{ \otp{\PrP'}{\Prg}{\gamma;\PrP} }
              { \otp{\PrP}{\WHILEC{B} \Prg\ \ELIHW}{\PsP} } 
                   \; {\PrP' \PeqB (\PrP, B), \PsP \PeqB (\PrP, \neg B)}
                      \hfill {\whileLawC}\\
                      
\inferrule{ \otp{\PrP'}{\Prg1}{\PsP} \;\;\;\;\;  \otp{\PrP''}{\Prg2}{\PsP} }
              { \otp{\PrP}{\IF{B} \ \THEN  \Prg1 \ \ELSE \Prg2\ \FI}{\PsP} }
                    \; {\PrP' \PeqB (\PrP, B), \PrP'' \PeqB (\PrP, \neg B)} 
                          \hfill \ifLaw\\

\inferrule{ \otp{\PrP'}{\Prg}{\PsP} \;\;\;\;\;  \PrP'' \Pord \PsP }
              { \otp{\PrP}{\IF{B} \ \THEN  \Prg\ \FI}{\PsP} }
                   \; {\PrP' \PeqB (\PrP, B), \PrP'' \PeqB (\PrP, \neg B)} 
                          \hfill \ifLawO
\end{mathpar}  
\caption{Axioms and rules of inference}
\label{table:laws-opAnnMeaning}
\label{table:deductiveSystem}
\end{table}

Table \ref{table:deductiveSystem} presents a deductive system for operational annotations.
I do not provide a rule for the \FOR loop, since it can be easily 
turned into a \textbf{\textit{while}}. 
\FOR loops quite compact, and so are very convenient for use in 
pre-programs and post-programs, \ie \emph{as} annotations. 
The following are informal intuition and proofs of soundness for the axioms and inference
rules.


\subsection{Axioms}

\paragraph{Sequence Axiom.} 
If $\Prg$ executes after pre-program $\PrP$, the result is
identical to post-program $\PrP;\Prg$, \ie the sequential composition of $\PrP$ and $\Prg$. 
This gives an easy way to calculate a post-program for given
pre-program and program. The corresponding Hoare logic notion, namely
the strongest postcondition, is easy to compute (in closed
form) only for straight-line code.

\bp \seqAx is valid                      
\ep
\bpr
By Definition~\ref{def:opTriple}, $\otp{\PrP}{\Prg}{\PrP;\Prg}$ is
$\pst{\PrP; \Prg} \sub \pst{\PrP; \Prg}$, which is immediate.
\epr

\paragraph{Empty pre-program.} 
Follows from \seqAx by replacing  $\PrP$ by the empty program \skp.
Program \Prg is ``doing all the work'', and so the resulting
post-program is also \Prg. Having \skp\ as a pre-program is similar to
having \true\ as a precondition in Hoare logic.

\paragraph{Empty Program.} 
Follows from \seqAx by replacing  $\Prg$ by the empty program \skp.
This is analogous to the axiom for \skp\ in Hoare logic:
$\htp{f}{\skp}{f}$, since \skp\ has no effect on the program state.

\subsection{Rules of Inference}

\paragraph{Trading Rule.} 
Sequential composition is associative: $\PrP; (\Prg1; \Prg2) \Peqv (\PrP; \Prg1); \Prg2$.
By Definitions~\ref{def:progEquiv}, \ref{def:postSt}:
$\pst{\PrP; (\Prg1; \Prg2)} = \pst{(\PrP; \Prg1); \Prg2}$
Hence, if the program is a sequential composition $\Prg1; \Prg2$,
I can take $\Prg1$ and add it to the end of the pre-program $\PrP$.
I can also go in the reverse direction, so technically there are two rules of inference here. I will refer to
both rules as \trading.
This \emph{seamless transfer between program and pre-program has no analogue in
Hoare logic}, and provides a major tactic for the derivation of
programs from operational specifications.

\bp \trading is sound, that is, each side holds iff the other does.
\ep
\bpr
By Definition~\ref{def:opTriple},
$\otp{\PrP}{\Prg1; \Prg2}{\PsP}$ is $\pst{\PrP; \Prg1; \Prg2} \sub \pst{\PsP}$, and
$\otp{\PrP; \Prg1}{\Prg2}{\PsP}$ is also $\pst{\PrP; \Prg1; \Prg2} \sub \pst{\PsP}$.
Hence each holds iff the other does.
\epr

\paragraph{Append Rule.} 
Appending the same program $\gamma$ to the program and the
post-program preserves the validity of an operational triple.
This is useful for appending new code into both the program and the post-program.

\bp \appRule is sound, that is, if the hypothesis holds, then so does the conclusion.
\ep
\bpr
I must show $\otp{\PrP}{\Prg;\gamma}{\PsP;\gamma}$, which by Definition~\ref{def:opTriple} is 
$\pst{\PrP; \Prg; \gamma} \sub \pst{\PsP; \gamma}$.
Let $s \in \pst{\PrP; \Prg; \gamma}$. Hence there is some state $t$ such that
$t \in \pst{\PrP; \Prg}$ and $(t,s) \in \behs{\gamma}$.
By assumption, $\otp{\PrP}{\Prg}{\PsP}$, which by Definition~\ref{def:opTriple} is $\pst{\PrP; \Prg} \sub \pst{\PsP}$.
Since $t \in \pst{\PrP; \Prg}$, I have $t \in \pst{\PsP}$.
Since $(t,s) \in \behs{\gamma}$, I also have $s \in \pst{\PsP; \gamma}$.
Since $s$ was chosen arbitrarily, I conclude
$\pst{\PrP; \Prg; \gamma} \sub \pst{\PsP; \gamma}$.
\epr

\paragraph{Substitution Rule.} 
Since the definition of operational annotation refers only to the
behavior of a program, it follows that 
one equivalent program can be replaced by another, 
This rule is useful for performing equivalence-preserving transformations, such as loop unwinding.

\bp \eqSubRule is sound, that is, if the hypothesis holds, then so does the conclusion.
\ep
\bpr
$\PrP \Peqv \PrP'$ implies that $\pst{\PrP} = \pst{\PrP'}$.
$\Prg \Peqv \Prg'$ means that $\behs{\Prg} = \behs{\Prg'}$.
Hence $\pst{\PrP;\Prg} = \pst{\PrP';\Prg'}$.
$\PsP \Peqv \PsP'$ implies that $\pst{\PsP} = \pst{\PsP'}$.
Hence  $\pst{\PrP;\Prg} \sub \pst{\PsP}$ iff $\pst{\PrP';\Prg'} \sub \pst{\PsP'}$.
Hence  $\otp{\PrP}{\Prg}{\PsP}$ iff $\otp{\PrP'}{\Prg'}{\PsP'}$.
\epr

\subsection{Rules of inference that are analogues of Hoare logic laws}

I now present syntax-based rules, which are straightforward analogues
of the corresponding Hoare logic rules.

\paragraph{Pre-program strengthening.}
Reducing the set of post-states of the pre-program cannot invalidate
an operational triple.

\bp \preRl is sound. 
\ep
\bpr
Let $t \in \pst{\PrP; \Prg}$. Then there exists a state $s$ such that
$s \in \pst{\PrP}$ and $(s,t) \in \behs{\Prg}$.
By assumption, $\PrP \Pord \PrP'$, and so $\pst{\PrP} \sub \pst{\PrP'}$.
Hence $s \in \pst{\PrP'}$.
From this and  $(s,t) \in \behs{\Prg}$,
I conclude  $t \in \pst{\PrP'; \Prg}$. 
Since $t$ is arbitrarily chosen, I have $\pst{\PrP; \Prg} \sub \pst{\PrP'; \Prg}$.
Hypothesis $\otp{\PrP'}{\Prg}{\PsP}$ means $\pst{\PrP'; \Prg} \sub \pst{\PsP}$.
Hence $\pst{\PrP; \Prg} \sub \pst{\PsP}$, 
and so $\otp{\PrP}{\Prg}{\PsP}$ by Definition~\ref{def:opTriple}.
\epr

\paragraph{Post-program weakening.}
Enlarging the set of post-states of the post-program cannot invalidate
an operational triple.

\bp \postRl is sound.
\ep
\bpr
Hypothesis $\otp{\PrP}{\Prg}{\PsP'}$ means $\pst{\PrP; \Prg} \sub \pst{\PsP'}$. 
Hypothesis $\PsP' \Pord \PsP$ means that $\pst{\PsP'} \sub \pst{\PsP}$. 
Hence $\pst{\PrP; \Prg} \sub \pst{\PsP}$, and so $\otp{\PrP}{\Prg}{\PsP}$
by Definition~\ref{def:opTriple}.
\epr

\paragraph{Sequential composition.}
The post-state set of $\PsP$ serves as the intermediate state-set in the execution of $\Prg1;\Prg2$:
it characterizes the possible states after $\Prg1$ executes and before $\Prg2$ executes.
That is, execution of \Prg1 starting from a post-state of $\PrP$ yields a post-state of \PsP.
Then execution of \Prg2 starting from a post-state of $\PsP$ yields a post-state of $\gamma$.

\bp \seqLaw is sound.
\ep
\bpr
Let $t \in \pst{\PrP;\Prg1;\Prg2}$. Then there is some $s$ such that
$s \in  \pst{\PrP;\Prg1}$ and $(s,t) \in \behs{\Prg2}$.
From hypothesis $\otp{\PrP}{\Prg1}{\PsP}$, I have $\pst{\PrP;\Prg1} \sub \pst{\PsP}$.
Hence $s \in \pst{\PsP}$. From this and $(s,t) \in \behs{\Prg2}$, I have
$t \in \pst{\PsP; \Prg2}$.
From hypothesis $\otp{\PsP}{\Prg2}{\gamma}$, I have $\pst{\PsP;\Prg2} \sub \pst{\gamma}$.
From this and $t \in \pst{\PsP; \Prg2}$, I have $t \in \pst{\gamma}$.
Since $t$ is chosen arbitrarily, I conclude $\pst{\PrP;\Prg1;\Prg2} \sub
\pst{\gamma}$.
Hence $\otp{\PrP}{\Prg1;\Prg2}{\gamma}$ by Definition~\ref{def:opTriple}.
\epr

\paragraph{While Rule.} 
Given \otp{\PrP}{\Prg}{\PrP}, I wish to conclude \otp{\PrP}{\WHILEC{B}\Prg}{\PsP} where 
\PsP is a ``conjunction'' of \PrP and $\neg B$, \ie the post-states of
\PsP are those that are post-states of \PrP, and also that satisfy
assertion $\neg B$, the negation of the looping condition.
Also, I wish to weaken the hypothesis of the rule from \otp{\PrP}{\Prg}{\PrP} to
\otp{\PrP'}{\Prg}{\PrP}, where $\PrP'$ is a ``conjunction'' of \PrP
and $B$, \ie
the post-states of $\PrP'$ are those that are post-states of $\PrP$ and
that also satisfy assertion $B$, the looping condition.
I therefore define the ``conjunction'' of a program and an assertion as follows.

\bd[Conjunction of program and condition]
Let $\PrP', \PrP$ be programs and $B$ a Boolean expression. Then define
$$\PrP' \PeqB (\PrP, B) \df \pst{\PrP'} = \pst{\PrP} \ints \set{ s \stt s(B) = \true }.$$
\ed
Note that this 
definition does not produce a unique result, and so is really a relation rather than a mapping.
The construction of $\PrP'$ is not straightforward, in general, for arbitrary assertion $B$.
Fortunately, most looping conditions are simple, typically a loop counter reaching a
limit. I therefore define the needed program $\PrP'$ by the semantic condition given above, and
leave the problem of deriving $\PrP'$ from $\PrP$ and $B$ to another occasion.

Given a while loop $\WHILEC{B} \Prg\ \ELIHW$ and pre-program $\PrP$,
let $\PrP'$ be a program such that $\PrP' \PeqB (\PrP, B)$, and
let $\PsP$ be a program such that $\PsP \PeqB (\PrP, \neg B)$.
The hypothesis of the rule is: execute $\PrP$ and restrict the set of
post-states to those in which $B$ holds. That is,
have $\PrP'$ as a pre-program for the loop body $\Prg$.
Then, after $\Prg$ is executed, the total resulting effect must be the
same as executing just $\PrP$. So, $\PrP$ is a kind of ``operational invariant''.
Given that this holds, and taking $\PrP$ as a pre-program for 
$\WHILEC{B} \Prg\ \ELIHW$, then upon termination, 
we have $\PrP$ as a post-program.
On the last iteration, $B$ is false, and the operational invariant $\PrP$ still holds.
Hence I can assert $\PsP$ as a post-program for the \WHILE loop.

\begin{theorem}
\label{thm:while}
\whileLaw is sound, \ie if the hypothesis is true, then so is the conclusion.
\end{theorem}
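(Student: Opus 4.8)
The plan is to prove the rule sound by induction on the number of iterations the loop performs, using the standard small-step semantics of the \WHILE loop: writing $W$ for $\WHILEC{B}\Prg\ \ELIHW$, if $W$ is started in a state $s$ with $s(B) = \false$ the loop halts immediately in $s$, and if $s(B) = \true$ then $W$ unfolds to $\Prg; W$, so any terminating execution of $W$ from $s$ splits into a behavior $(s,u) \in \behs{\Prg}$ followed by a terminating execution of $W$ from $u$ that uses one fewer iteration. Recall that the side conditions $\PrP' \PeqB (\PrP, B)$ and $\PsP \PeqB (\PrP, \neg B)$ give $\pst{\PrP'} = \pst{\PrP} \ints \set{s \stt s(B) = \true}$ and $\pst{\PsP} = \pst{\PrP} \ints \set{s \stt s(B) = \false}$, and that the hypothesis $\otp{\PrP'}{\Prg}{\PrP}$ unfolds by Definition~\ref{def:opTriple} to $\pst{\PrP'; \Prg} \sub \pst{\PrP}$.

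The heart of the argument is the lemma: for every $s \in \pst{\PrP}$ and every $t$ with $(s,t) \in \behs{W}$, we have $t \in \pst{\PsP}$. I would prove this by induction on the number $k \ge 0$ of complete executions of the body $\Prg$ occurring in the execution of $W$ from $s$ to $t$ (finite, since the execution is finite). If $k = 0$ then $s(B) = \false$ and $t = s$; since $s \in \pst{\PrP}$ and $s(B) = \false$, the side condition for $\PsP$ gives $t = s \in \pst{\PsP}$. If $k \ge 1$ then $s(B) = \true$, and the execution decomposes into $(s,u) \in \behs{\Prg}$ and an execution of $W$ from $u$ to $t$ using $k - 1$ body executions; from $s \in \pst{\PrP}$ and $s(B) = \true$ the side condition for $\PrP'$ gives $s \in \pst{\PrP'}$, hence $u \in \pst{\PrP'; \Prg}$, and then $\pst{\PrP'; \Prg} \sub \pst{\PrP}$ gives $u \in \pst{\PrP}$; the induction hypothesis applied to $u$ and $t$ yields $t \in \pst{\PsP}$. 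This is precisely the ``operational invariant'' idea: $\pst{\PrP}$ is re-established after each iteration.

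Given the lemma, soundness is immediate. To show $\otp{\PrP}{W}{\PsP}$, i.e.\ $\pst{\PrP; W} \sub \pst{\PsP}$ by Definition~\ref{def:opTriple}, take any $t \in \pst{\PrP; W}$; by Definition~\ref{def:postSt} there is a state $s \in \pst{\PrP}$ with $(s,t) \in \behs{W}$, and the lemma gives $t \in \pst{\PsP}$. Since $t$ is arbitrary, $\pst{\PrP; W} \sub \pst{\PsP}$, as required.

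The step I expect to be the main obstacle is the semantics-level justification that a terminating execution of $W$ from a state with $B$ true really does decompose into one full run of the body followed by a strictly ``smaller'' execution of $W$ — this is the only place the proof genuinely uses the details of the SOS semantics, and it is what makes the induction well-founded (a terminating execution of $W$ cannot diverge inside the body on its first iteration, so whenever $s(B) = \true$ we have $k \ge 1$). The remaining manipulations are a routine chase through Definitions~\ref{def:postSt} and~\ref{def:opTriple} together with the two $\PeqB$ side conditions.
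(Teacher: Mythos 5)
Your proof is correct and takes essentially the same approach as the paper: an induction on the number of loop iterations using $\pst{\PrP}$ as an operational invariant, with the $\PeqB$ side conditions supplying membership in $\pst{\PrP'}$ when $B$ holds and in $\pst{\PsP}$ when it fails. The only difference is cosmetic --- you induct on the number of remaining iterations and fold the termination case into the base case, whereas the paper inducts forward on completed iterations and handles $t(B)=\false$ in a separate final step --- and you are somewhat more explicit about the SOS-level decomposition of a terminating execution of the loop, which the paper leaves implicit.
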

\begin{proof}
  I establish, by induction on $i$, the following claim:
  \begin{quote}
  Assume the hypothesis \otp{\PrP'}{\Prg}{\PrP}, and that execution of the loop starts in a state
  $s_0 \in \pst{\PrP}$.
  Then if the loop executes for at least $i$ iterations,
  the state $s_i$ at the end of the $i$'th iteration is in $\pst{\PrP}$.
\end{quote}
Base case is for $i=0$: The state at the end of the 0'th iteration is just the start state $s_0$,
which is in  $\pst{\PrP}$ by assumption.\\
Induction step for $i > 0$: By the induction hypothesis, $s_{i-1} \in \pst{\PrP}$.
Now $s_{i-1}(B) = \true$, since otherwise the $i$'th iteration would not have been executed.
Hence $s_{i-1} \in \pst{\PrP'}$.   By the hypothesis \otp{\PrP'}{\Prg}{\PrP}, I have
$s_i \in \pst{\PrP}$.

Hence the claim is established.
I now show that \otp{\PrP}{\WHILEC{B} \Prg\ \ELIHW}{\PsP} holds.
Assume execution starts in an arbitrary $s \in \pst{\PrP}$ and that the loop terminates in some
state $t$. By the above claim, $t \in \pst{\PrP}$. Also, $t(B) = \false$ since otherwise the loop
cannot terminate in state $t$. Hence $t \in \pst{\PsP}$, and so
\otp{\PrP}{\WHILEC{B} \Prg\ \ELIHW}{\PsP} is established. Hence \whileLaw is sound.
\end{proof}

\paragraph{While Rule with Consequence.}
By applying Proposition~\ref{prop:ordering} and \postRl to \whileLaw,
I obtain \whileLawC, 
which states that the operational invariant can be a ``suffix'' of the actual
post-program of the loop body. This is often convenient, in practice.

\begin{theorem}
\label{thm:whileC}
\whileLawC is sound, \ie if the hypothesis is true, then so is the conclusion.
\end{theorem}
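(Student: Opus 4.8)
The plan is to derive \whileLawC as an immediate consequence of \whileLaw (Theorem~\ref{thm:while}), together with Proposition~\ref{prop:ordering} and \postRl, exactly as the preceding paragraph anticipates. The key observation is that the only difference between the two rules is the post-program of the loop body: \whileLaw needs the premise $\otp{\PrP'}{\Prg}{\PrP}$, whereas \whileLawC needs only the weaker premise $\otp{\PrP'}{\Prg}{\gamma;\PrP}$, in which the ``operational invariant'' $\PrP$ is allowed to occur merely as a suffix of a longer post-program $\gamma;\PrP$. If I can bridge this one gap, the conclusion and all side conditions ($\PrP' \PeqB (\PrP, B)$ and $\PsP \PeqB (\PrP, \neg B)$) are literally identical, so \whileLaw applies verbatim.

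Concretely I would proceed as follows. First, assume the premise of \whileLawC, namely $\otp{\PrP'}{\Prg}{\gamma;\PrP}$, together with the side conditions $\PrP' \PeqB (\PrP, B)$ and $\PsP \PeqB (\PrP, \neg B)$. Second, invoke Proposition~\ref{prop:ordering} with the programs $\gamma$ and $\PrP$ to obtain $\gamma;\PrP \Pord \PrP$. Third, apply \postRl, instantiating its intermediate post-program with $\gamma;\PrP$ and its final post-program with $\PrP$: from $\otp{\PrP'}{\Prg}{\gamma;\PrP}$ and $\gamma;\PrP \Pord \PrP$ it yields $\otp{\PrP'}{\Prg}{\PrP}$. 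Fourth, this is precisely the premise required by \whileLaw, and since the side conditions are unchanged, Theorem~\ref{thm:while} gives $\otp{\PrP}{\WHILEC{B}\Prg\ \ELIHW}{\PsP}$, which is exactly the conclusion of \whileLawC. Because each rule used is sound, chaining them shows that the conclusion of \whileLawC is valid whenever its premise is.

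There is essentially no hard step: the statement is a ``derived rule'' and the argument is a two-line composition of facts already in hand. The single point worth stating explicitly is the legitimacy of this chaining — since \postRl and \whileLaw have each been proved sound (a valid premise forces a valid conclusion), any operational triple obtained from a valid premise by a finite sequence of applications of sound rules is itself valid, so \whileLawC is sound. As a self-contained alternative I could instead reprove the claim semantically, repeating the induction from the proof of Theorem~\ref{thm:while} but replacing the step ``$s_{i-1}\in\pst{\PrP'}$, hence by the premise $s_i\in\pst{\PrP}$'' with ``$s_{i-1}\in\pst{\PrP'}$, hence by the premise $s_i\in\pst{\gamma;\PrP}$, and then $\pst{\gamma;\PrP}\sub\pst{\PrP}$ by Proposition~\ref{prop:ordering} gives $s_i\in\pst{\PrP}$''; the remainder of that proof is unchanged. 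I would present the modular derivation as the main proof, since it is shortest and makes the dependency on the earlier rules transparent.
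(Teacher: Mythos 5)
Your proposal is correct and matches the paper's own proof exactly: assume the premise, obtain $\gamma;\PrP \Pord \PrP$ from Proposition~\ref{prop:ordering}, apply \postRl to get $\otp{\PrP'}{\Prg}{\PrP}$, and then invoke \whileLaw. The extra remarks on chaining sound rules and the alternative direct induction are fine but not needed.
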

\begin{proof}
Assume the hypothesis \otp{\PrP'}{\Prg}{\gamma;\PrP}.
By   Proposition~\ref{prop:ordering}, $\gamma; \PrP \Pord \PrP$.
Hence by \postRl, \otp{\PrP'}{\Prg}{\PrP}.
Hence by \whileLaw, \otp{\PrP}{\WHILEC{B} \Prg \ELIHW}{\PsP}.
\end{proof}

\paragraph{If Rule.} 
Let \PrP be the pre-program.
Assume that execution of $\Prg1$ with pre-program $\PrP' \PeqB (\PrP, B)$
 leads to post-program \PsP,
and that execution of $\Prg2$ with pre-program $\PrP'' \PeqB (\PrP, \neg B)$
also leads to post-program \PsP. 
Then, execution of $\IF{B} \ \THEN  \Prg1 \ \ELSE \Prg2\ \FI$ with pre-program \PrP leads to
post-program \PsP.

\begin{theorem}
\label{thm:if}
\ifLaw is sound, \ie if both hypotheses are true, then so is the conclusion.
\end{theorem}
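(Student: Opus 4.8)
The plan is to unfold the definition of the operational triple in the conclusion and reduce the goal to an inclusion of post-state sets, then split according to the value of the Boolean guard $B$ in the relevant intermediate state. By Definition~\ref{def:opTriple}, the conclusion $\otp{\PrP}{\IF{B} \ \THEN  \Prg1 \ \ELSE \Prg2\ \FI}{\PsP}$ means $\pst{\PrP; \IF{B} \ \THEN  \Prg1 \ \ELSE \Prg2\ \FI} \sub \pst{\PsP}$. So I would pick an arbitrary $t \in \pst{\PrP; \IF{B} \ \THEN  \Prg1 \ \ELSE \Prg2\ \FI}$ and show $t \in \pst{\PsP}$.

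First I would observe that, by the operational semantics of the conditional, any behavior of $\PrP; \IF{B} \ \THEN  \Prg1 \ \ELSE \Prg2\ \FI$ decomposes at the state $s$ reached just before the conditional is evaluated: there is a state $s \in \pst{\PrP}$ such that either ($s(B) = \true$ and $(s,t) \in \behs{\Prg1}$) or ($s(B) = \false$ and $(s,t) \in \behs{\Prg2}$). In the first case, $s \in \pst{\PrP}$ together with $s(B) = \true$ gives $s \in \pst{\PrP} \ints \set{u \stt u(B) = \true}$, which by the side condition $\PrP' \PeqB (\PrP, B)$ equals $\pst{\PrP'}$; hence $s \in \pst{\PrP'}$, so with $(s,t) \in \behs{\Prg1}$ we get $t \in \pst{\PrP';\Prg1}$. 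The hypothesis $\otp{\PrP'}{\Prg1}{\PsP}$, i.e. $\pst{\PrP';\Prg1} \sub \pst{\PsP}$, then yields $t \in \pst{\PsP}$. The second case is symmetric, using $\PrP'' \PeqB (\PrP, \neg B)$ (noting $s(B) = \false$ is the same as $s(\neg B) = \true$) and the hypothesis $\otp{\PrP''}{\Prg2}{\PsP}$. Since $t$ was arbitrary, $\pst{\PrP; \IF{B} \ \THEN  \Prg1 \ \ELSE \Prg2\ \FI} \sub \pst{\PsP}$, which is the conclusion.

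The only slightly delicate point — the ``main obstacle'' — is the first step: justifying that every behavior of $\PrP$ followed by the conditional really does pass through a well-defined pre-guard state $s$ with $s \in \pst{\PrP}$, and that the guard's truth value in $s$ selects which branch executes. This is exactly what the small-step (SOS) semantics assumed in Section~\ref{sec:prelim} delivers, since $B$ is a side-effect-free Boolean expression and the language is deterministic apart from the explicit $\ch$ operator; I would state this as an appeal to the semantics rather than belabour it. Everything after that is routine set-theoretic manipulation of $\pst{\cdot}$ via Definitions~\ref{def:postSt} and~\ref{def:opTriple}, mirroring the pattern already used in the soundness proofs of \seqLaw and \preRl.
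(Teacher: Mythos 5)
Your proposal is correct and follows essentially the same argument as the paper's proof: a case split on the value of the guard $B$ in the intermediate state $s \in \pst{\PrP}$, using the side conditions $\PrP' \PeqB (\PrP, B)$ and $\PrP'' \PeqB (\PrP, \neg B)$ to invoke the respective hypotheses. The only difference is presentational --- you work backward from an arbitrary $t \in \pst{\PrP; \IF{B}\ \THEN \Prg1\ \ELSE \Prg2\ \FI}$ and are somewhat more explicit about the semantic decomposition, whereas the paper traces execution forward from $s$ --- but the underlying reasoning is identical.
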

\bpr
Assume execution starts in an arbitrary $s \in \pst{\PrP}$.
Suppose that $B$ holds in $s$. Then, $s \in \pst{\PrP'}$. Also, the \IF\ branch will be taken,
so that $\Prg1$ is executed. Let $t$ be any resulting state. From the hypothesis
\otp{\PrP'}{\Prg1}{\PsP}, I have $t \in \pst{\PsP}$.
Now suppose that $B$ does not hold in $s$. Then $s \in \pst{\PrP''}$. Also, the \ELSE\ branch will be taken,
so that $\Prg2$ is executed. Let $t$ be any resulting state. From the hypothesis
\otp{\PrP''}{\Prg2}{\PsP}, I have $t \in \pst{\PsP}$.
By Definition~\ref{def:opTriple},
$\otp{\PrP}{\IF{B} \ \THEN  \Prg1 \ \ELSE \Prg2\ \FI}{\PsP}$ is valid.
\epr

\paragraph{One-way If Rule.} 
Assume that execution of $\Prg$ with pre-program $\PrP' \PeqB (\PrP, B)$
leads to post-program \PsP.
Assume that any post-state of $\PrP'' \PeqB (\PrP, \neg B)$ is also a post-state
of $\PsP$. Then execution of $\IF{B} \ \THEN  \Prg\ \FI$ with pre-program \PrP always leads to
post-program \PsP.

\begin{theorem}
\label{thm:if-oneway}
\ifLawO is sound, \ie if both hypotheses are true, then so is the conclusion.
\end{theorem}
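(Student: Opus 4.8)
The plan is to mimic the proof of the \ifLaw (Theorem~\ref{thm:if}) almost verbatim, the only new wrinkle being that on the ``false'' branch the statement $\IF{B}\ \THEN\Prg\ \FI$ behaves like \skp, so I must argue directly about $s$ itself rather than about a state produced by running $\Prg$. Concretely, I would unfold the goal via Definition~\ref{def:opTriple} to $\pst{\PrP;\,\IF{B}\ \THEN\Prg\ \FI} \sub \pst{\PsP}$, pick an arbitrary $t$ in the left-hand side, and expose the witnessing start state $s \in \pst{\PrP}$ from which the conditional runs to $t$. Then I split on whether $B$ holds in $s$.

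In the case $s(B) = \true$: by $\PrP' \PeqB (\PrP, B)$ I get $s \in \pst{\PrP'}$; the then-branch is taken, so $t$ is a state reached by executing $\Prg$ from $s$, whence $t \in \pst{\PrP';\Prg}$; and the first hypothesis $\otp{\PrP'}{\Prg}{\PsP}$, i.e.\ $\pst{\PrP';\Prg} \sub \pst{\PsP}$, gives $t \in \pst{\PsP}$. In the case $s(B) = \false$: by $\PrP'' \PeqB (\PrP, \neg B)$ I get $s \in \pst{\PrP''}$; the conditional executes no statement, so $t = s$; and the second hypothesis $\PrP'' \Pord \PsP$, i.e.\ $\pst{\PrP''} \sub \pst{\PsP}$, gives $t = s \in \pst{\PsP}$. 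In both cases $t \in \pst{\PsP}$, and since $t$ was arbitrary we conclude $\pst{\PrP;\,\IF{B}\ \THEN\Prg\ \FI} \sub \pst{\PsP}$, which by Definition~\ref{def:opTriple} is exactly $\otp{\PrP}{\IF{B}\ \THEN\Prg\ \FI}{\PsP}$.

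I do not anticipate a real obstacle here; the one point that needs a moment's care is being explicit that when $B$ is false the final state of $\IF{B}\ \THEN\Prg\ \FI$ coincides with its initial state $s$ (this is where the definition of the language's one-way conditional is invoked), so that the ordering hypothesis $\PrP'' \Pord \PsP$ — rather than an operational triple about $\Prg$ — is what closes that branch. This is also the only place the statement of the rule differs structurally from \ifLaw, where a genuine second triple $\otp{\PrP''}{\Prg2}{\PsP}$ was available.
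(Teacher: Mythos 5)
Your proof is correct and follows essentially the same route as the paper's: case split on whether $B$ holds in the start state $s$, using the operational triple $\otp{\PrP'}{\Prg}{\PsP}$ on the true branch and the ordering hypothesis $\PrP'' \Pord \PsP$ together with the observation that the final state equals $s$ on the false branch. Your version merely makes the set-theoretic unfolding of Definition~\ref{def:opTriple} (picking $t$ and exposing the witness $s$) slightly more explicit than the paper does.
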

\bpr
Assume execution starts in an arbitrary $s \in \pst{\PrP}$.
Suppose that $B$ holds in $s$. Then, $s \in \pst{\PrP'}$. Also, the \IF\ branch will be taken,
so that $\Prg$ is executed. Let $t$ be any resulting state. From the hypothesis
\otp{\PrP'}{\Prg}{\PsP}, I have $t \in \pst{\PsP}$.
Now suppose that $B$ does not hold in $s$. Then $s \in \pst{\PrP''}$.
Now $\PrP'' \Pord \PsP$, and so $s \in \pst{\PsP}$ by Def.~\ref{def:progOrd}.
Since the \IF\ branch is not taken, there is no
change of state, and so the resulting state $t$ is the same as $s$. Hence $t \in \pst{\PsP}$.
By Definition~\ref{def:opTriple}, \otp{\PrP}{\IF{B} \ \THEN  \Prg\ \FI}{\PsP}.
\epr

\section{Examples}

I now illustrate program verification with operational triples by means of several examples.
Throughout, I use an informally justified notion of program equivalence, based on well-known transformations such as eliminating the empty program \skp, and unwinding the last iteration of a \FOR loop.
Specifications programs are written in bold red italics, and regular programs are written in typewriter.
   \newcommand{\swp}{\leftrightarrow}
\newcommand{\lrk}{\mathit{findRank}}
\newcommand{\lrm}{\mathit{findMin}}
\newcommand{\vphi}{\varphi}

\newcommand{\rndi}{[0 : n-2]}
\newcommand{\rndip}{[1 : n-1]}

\subsection{Example: selection sort}
\label{app:sorting}




The input is an array $a$ and its size $n$, with $a$ indexed
from 0 to $n-1$. The assignment $i \asg [a:b]$ (with $a \le b$) nondeterministically
chooses a value between $a$ and $b$ inclusive, and assigns it to $i$.
The following holds by \emptPrg. I will develop this into the loop body for selection sort, which will be iterated for $i$ from 0 to $n-2$.
$\lrk(a,I)$ returns the index of the rank $I$ element in array $a$.
In case of duplicates, I take the element with lower index to also be of
lower rank. This breaks ties for duplicates, and also ensures that the
resulting sort is stable. We can specify a non-stable sort by
assigning the rank randomly amongst duplicate values (within the
appropriate range).
\begin{tabbing}
mmm\=mmm\=mmm\= \kill  
\lio{\spec{i \asg \rndi; }}
\lio{\spec{ \FORC{I = 0 \TO i}  J \asg \lrk(a,I); a[I] \swp a[J] \ROF}}
\lio{\reg{ \ttskp }}
\lio{\spec{i \asg \rndi; }}
\lion{\spec{ \FORC{I = 0 \TO i}  J \asg \lrk(a,I); a[I] \swp a[J] \ROF}}
\end{tabbing}
Now unwind the last iteration of the loop in the pre-program:
\begin{tabbing}
mmm\=mmm\=mmm\= \kill  
\lio{\spec{i \asg \rndi; }}
\lio{\spec{ \FORC{I = 0 \TO i}  J \asg \lrk(a,I); a[I] \swp a[J] \ROF}}
\lio{\spec{ J \asg \lrk(a,i); a[i] \swp a[J] }}
\lio{\reg{ \ttskp }}
\lio{\spec{i \asg \rndi; }}
\lion{\spec{ \FORC{I = 0 \TO i}  J \asg \lrk(a,I); a[I] \swp a[J] \ROF}}
\end{tabbing}
Now use \trading to move the third line of the pre-program into the
program.
Then remove the $skip$, as it is no longer needed.
\begin{tabbing}
  mmm\=mmm\=mmm\= \kill
\lio{\spec{i \asg \rndi; }}  
\lio{\spec{ \FORC{I = 0 \TO i-1}  J \asg \lrk(a,I); a[I] \swp a[J] \ROF}}
\lio{\reg{ J \asg \lrk(a,i); a[i] \swp a[J] }}
\lio{\spec{i \asg \rndi; }}
\lion{\spec{ \FORC{I = 0 \TO i}  J \asg \lrk(a,I); a[I] \swp a[J] \ROF}}
\end{tabbing}
Now I compute $\lrk(a,i)$ explicitly using an inner loop.
$\lrm(a,\vphi)$ computes the location of the lowest minimum element in
array $a$, excluding the consideration of elements whose indices are in the set $\vphi$.
\begin{tabbing}
mmm\=mmm\=mmm\= \kill
\lio{\spec{i \asg \rndi; }}  
\lio{\spec{ \FORC{I = 0 \TO i-1}  J \asg \lrk(a,I); a[I] \swp a[J] \ROF}}
\lio{\reg{ \vphi \asg \emptyset; \FORC{k = 0 \TO i-1}  \vphi \asg \vphi \un \lrm(a,\vphi) \ROF}} 
\lio{\reg{ J \asg \lrm(a,\vphi) } }
\lio{\reg{ a[i] \swp a[J] }}
\lio{\spec{i \asg \rndi; }}  
\lion{\spec{ \FORC{I = 0 \TO i}  J \asg \lrk(a,I); a[I] \swp a[J] \ROF}}
\end{tabbing}
Now use \trading to trade into the pre-program:
\begin{tabbing}
mmm\=mmm\=mmm\= \kill
\lio{\spec{i \asg \rndi; }}  
\lio{\spec{ \FORC{I = 0 \TO i-1}  J \asg \lrk(a,I); a[I] \swp a[J] \ROF}}
\lio{\spec{ \vphi \asg \emptyset; \FORC{k = 0 \TO i-1} \vphi \asg \vphi \un \lrm(a,\vphi)  \ROF}}
\lio{\reg{ J \asg \lrm(a,\vphi) } }
\lio{\reg{ a[i] \swp a[J] }}
\lio{\spec{i \asg \rndi; }}  
\lion{\spec{ \FORC{I = 0 \TO i}  J \asg \lrk(a,I); a[I] \swp a[J] \ROF}}
\end{tabbing}
Now construct a sequence of equivalences on the pre-program
\begin{tabbing}
mmm\=mmm\=mmm\= \kill  
\lio{\spec{ \FORC{I = 0 \TO i-1}  J \asg \lrk(a,I); a[I] \swp a[J] \ROF}}
\lio{\spec{ \vphi \asg \emptyset; \FORC{k = 0 \TO i-1} \vphi \asg \vphi \un \lrm(a,\vphi) \ROF}}
\lio{\Peqv}
\lio{\spec{ \vphi \asg \emptyset; }}
\lio{\spec{ \FORC{I = 0 \TO i-1}  J \asg \lrk(a,I); a[I] \swp a[J]; \vphi \asg \vphi \un \lrm(a,\vphi)  \ROF}}
\lio{\Peqv}
\lio{\spec{ \FORC{I = 0 \TO i-1}  J \asg \lrk(a,I); a[I] \swp a[J];  \vphi \asg \vphi \un I  \ROF}}
\lio{\Peqv}
\lio{\spec{ \FORC{I = 0 \TO i-1}  J \asg \lrk(a,I); a[I] \swp a[J] \ROF;}}
\lio{\spec{ \vphi \asg \{0,\ldots,i-1\}; }}
\end{tabbing}
I therefore have, using \eqSubRule
\begin{tabbing}
 mmm\=mmm\=mmm\= \kill
\lio{\spec{i \asg \rndi; }}  
\lio{\spec{ \FORC{I = 0 \TO i-1}  J \asg \lrk(a,I); a[I] \swp a[J] \ROF;}}
\lio{\spec{ \vphi \asg \{0,\ldots,i-1\}; }}
\lio{\reg{ J \asg \lrm(a,\vphi) } }
\lio{\reg{ a[i] \swp a[J] }}
\lio{\spec{i \asg \rndi; }}  
\lion{\spec{ \FORC{I = 0 \TO i}  J \asg \lrk(a,I); a[I] \swp a[J] \ROF}}
\end{tabbing}
Now use \trading to trade into the program
\begin{tabbing}
  mmm\=mmm\=mmm\= \kill
  \lio{\spec{i \asg \rndi; }}  
\lio{\spec{ \FORC{I = 0 \TO i-1}  J \asg \lrk(a,I); a[I] \swp a[J] \ROF;}}
\lio{\reg{ \vphi \asg \{0,\ldots,i-1\}; }}
\lio{\reg{ J \asg \lrm(a,\vphi) } }
\lio{\reg{ a[i] \swp a[J] }}
\lio{\spec{i \asg \rndi; }}  
\lion{\spec{ \FORC{I = 0 \TO i}  J \asg \lrk(a,I); a[I] \swp a[J] \ROF}}
\end{tabbing}
I now replace $\vphi \asg \{0,\ldots,i-1\};  J \asg \lrm(a,\vphi)$ by the equivalent
$J \asg \lrm(a, [0:i-1])$, so that $\lrm(a, [0:i-1])$ finds a minimum element
in array $a$ in the range $i$ to $n-1$, since the indices in $[0:i-1]$
are excluded. 
Hence, by \eqSubRule
\begin{tabbing}
  mmm\=mmm\=mmm\= \kill
  \lio{\spec{i \asg \rndi; }}  
\lio{\spec{ \FORC{I = 0 \TO i-1}  J \asg \lrk(a,I); a[I] \swp a[J] \ROF;}}
\lio{\reg{ J \asg \lrm(a, [0:i-1]); }}
\lio{\reg{ a[i] \swp a[J] }}
\lio{\spec{i \asg \rndi; }}  
\lion{\spec{ \FORC{I = 0 \TO i}  J \asg \lrk(a,I); a[I] \swp a[J] \ROF}}
\end{tabbing}
I now add the increment of $i$ at the end of the loop:
\begin{tabbing}
  mmm\=mmm\=mmm\= \kill
\lio{\spec{i \asg \rndi; }}  
\lio{\spec{ \FORC{I = 0 \TO i-1}  J \asg \lrk(a,I); a[I] \swp a[J] \ROF;}}
\lio{\reg{ J \asg \lrm(a, [0:i-1]); }}
\lio{\reg{ a[i] \swp a[J] }}
\lio{\spec{i \asg \rndi; }}  
\lio{\spec{ \FORC{I = 0 \TO i}  J \asg \lrk(a,I); a[I] \swp a[J] \ROF}}
\lio{\reg{ i \asg i+1 }}
\lio{\spec{i \asg \rndip; }}  
\lion{\spec{ \FORC{I = 0 \TO i-1}  J \asg \lrk(a,I); a[I] \swp a[J] \ROF;}}
\end{tabbing}
which gives the body of selection sort.
I finish up the example by including the outer loop
\begin{tabbing}
mmm\=mmm\=mmm\= \kill
\lio{\spec{\skp}}
\lio{\reg{i \asg 0}}
\lio{\spec{i \asg 0}}
\lio{\spec{ \FORC{I = 0 \TO i-1}  J \asg \lrk(a,I); a[I] \swp a[J] \ROF}}  

\lio{\reg{ \ttWHILEC{i \ne n-1} }}

\lit{\spec{i \asg \rndi; }}  
\lit{\spec{ \FORC{I = 0 \TO i-1}  J \asg \lrk(a,I); a[I] \swp a[J] \ROF}}
\lit{\reg{ J \asg \lrm(a, [0:i-1]); }}
\lit{\reg{ a[i] \swp a[J] }}
\lit{\spec{i \asg \rndi; }}  
\lit{\spec{ \FORC{I = 0 \TO i}  J \asg \lrk(a,I); a[I] \swp a[J] \ROF}}
\lit{\reg{ i \asg i+1 }}
\lit{\spec{i \asg \rndip; }}  
\lit{\spec{ \FORC{I = 0 \TO i-1}  J \asg \lrk(a,I); a[I] \swp a[J] \ROF;}}

\lio{\reg{ \ttELIHW }}

\lio{\spec{ i \asg n-1 } }
\lion{\spec{ \FORC{I = 0 \TO i}  J \asg \lrk(a,I); a[I] \swp a[J] \ROF}}
\end{tabbing}
By applying \postRl, I obtain
\begin{tabbing}
mmm\=mmm\=mmm\= \kill
\lio{\spec{\skp}}
\lio{\reg{i \asg 0}}
\lio{\spec{i \asg 0}}
\lio{\spec{ \FORC{I = 0 \TO i-1}  J \asg \lrk(a,I); a[I] \swp a[J] \ROF}}  

\lio{\reg{ \ttWHILEC{i \ne n-1} }}

\lit{\spec{i \asg \rndi; }}  
\lit{\spec{ \FORC{I = 0 \TO i-1}  J \asg \lrk(a,I); a[I] \swp a[J] \ROF}}
\lit{\reg{ J \asg \lrm(a, [0:i-1]); }}
\lit{\reg{ a[i] \swp a[J] }}
\lit{\spec{i \asg \rndi; }}  
\lit{\spec{ \FORC{I = 0 \TO i}  J \asg \lrk(a,I); a[I] \swp a[J] \ROF}}
\lit{\reg{ i \asg i+1 }}
\lit{\spec{i \asg \rndip; }}  
\lit{\spec{ \FORC{I = 0 \TO i-1}  J \asg \lrk(a,I); a[I] \swp a[J] \ROF;}}

\lio{\reg{ \ttELIHW }}

\lion{\spec{ \FORC{I = 0 \TO n-1}  J \asg \lrk(a,I); a[I] \swp a[J] \ROF}}
\end{tabbing}
The post-program states that the element at index $I$ has rank $I$,
that is, array $a$ is sorted.
I omit the development of $\lrm$.
   \newcounter{codectr}

\newcommand{\rnds}{?}
\newcommand{\rndsp}{?'}

\subsection{Example: Dijkstra's shortest paths algorithm}

The input consists of the following:
\be
\item a directed graph $G = (V,E)$ with node set
$V$ and edge set $E$, together with a weight function $w:E \to
\Reals^{\ge 0}$, since Dijkstra's algorithm  requires that edge weights are non-negative.
\item a distinguished vertex $s$: the source.
\item for each node $t$, a real number $t.d$, which records the current estimate of 
   the shortest path distance from the source $s$ to $t$
\ee
I use $\delta[t]$ to denote the shortest path distance from the source
$s$ to node $t$, and $u \ar v$ to mean $(u,v) \in E$, \ie there is an
edge in $G$ from $u$ to $v$.

Dijkstra's algorithm is incremental, each iteration of the main loop
computes the shortest path distance of some node $m$.  Vertices are
colored black (shortest path distance from the source $s$ is known),
grey (have an incoming edge form a black node), and white (neither
black nor grey). The algorithm maintains the sets $B$ and $G$ of black
and grey nodes.

\reset


I start with an instance of \emptPrg.
The assignment $B \asg \rnds$ nondeterministically sets $B$ to any subset of $V$ which
contains the source $s$. It can be implemented using the nondeterministic choice operator $\ch$
(Section~\ref{sec:prelim}).
The pre- and post-programs
simply assign the correct shortest path distances to the $t.d$
variable for each node $t$ in $B$.
\begin{tabbing}
  mmm\=mmm\=mmm\= \kill
\lio{\spec{B \asg \rnds}}  
\lio{\spec{\FORC{t \in B} t.d \asg \delta[t] \ROF; \FORC{t \nin B} t.d \asg +\infty \ROF}}
\lio{\reg{\ttskp}}
\lio{\spec{B \asg \rnds}}  
\lion{\spec{\FORC{t \in B} t.d \asg \delta[t] \ROF; \FORC{t \nin B} t.d \asg +\infty \ROF}}
\end{tabbing}
Now append to both the  pre- and post-programs a statement which relaxes all of the grey nodes.
This is still an instance of \emptPrg.
\begin{tabbing}
  mmm\=mmm\=mmm\= \kill
  \lio{\spec{B \asg \rnds}}  
\lio{\spec{\FORC{t \in B} t.d \asg \delta[t] \ROF; \FORC{t \nin B} t.d \asg +\infty \ROF}}
\lio{\spec{\FORC{g \in G} \FORC{b \in B \land b \ar g} relax(b,g) \ROF \ROF}\setcounter{codectr}{\arabic{lctr}}}
\lio{\reg{\ttskp}}
\lio{\spec{B \asg \rnds}}  
\lio{\spec{\FORC{t \in B} t.d \asg \delta[t] \ROF; \FORC{t \nin B} t.d \asg +\infty \ROF}}
\lion{\spec{\FORC{g \in G} \FORC{b \in B \land b \ar g} relax(b,g) \ROF \ROF}}
\end{tabbing}

\remove{
Now apply \trading to move line~\arabic{codectr} from the pre-program to the program.
Also delete the \skp\ as it is now superfluous.
\begin{tabbing}
mmm\=mmm\=mmm\= \kill
\lio{\spec{B \asg \rnds}}  
\lio{\spec{\FORC{t \in B} t.d \asg \delta[t] \ROF; \FORC{t \nin B} t.d \asg +\infty \ROF}}

\lio{\reg{\ttFORC{g \in G} \ttFORC{b \in B \land b \ar g} relax(b,g) \ttROF \ttROF}}

\lio{\spec{B \asg \rnds}}  
\lio{\spec{\FORC{t \in B} t.d \asg \delta[t] \ROF; \FORC{t \nin B} t.d \asg +\infty \ROF}}
\lion{\spec{\FORC{g \in G} \FORC{b \in B \land b \ar g} relax(b,g) \ROF \ROF}}
\end{tabbing}
}

\remove{
Apply \appRule:
\begin{tabbing}
mmm\=mmm\=mmm\= \kill
\lio{\spec{B \asg \rnds}}    
\lio{\spec{\FORC{t \in B} t.d \asg \delta[t] \ROF; \FORC{t \nin B} t.d \asg +\infty \ROF}}
\lio{\spec{\FORC{t \in B} t.d \asg \delta[t] \ROF; \FORC{t \nin B} t.d \asg +\infty \ROF}}


\lio{\reg{m \asg \{ g \sp g.d = (\MIN h \in G: h.d) \}}}

\lio{\spec{B \asg \rnds}}  
\lio{\spec{\FORC{t \in B} t.d \asg \delta[t] \ROF; \FORC{t \nin B} t.d \asg +\infty \ROF}}
\lio{\spec{\FORC{g \in G} \FORC{b \in B \land b \ar g} relax(b,g) \ROF \ROF}}
\lio{\spec{m \asg \{ g \sp g.d = (\MIN h \in G: h.d) \}}}
\end{tabbing}
}

\noindent
Apply \appRule to add
``$m \asg \{ g \sp g.d = (\MIN h \in G: h.d)$;  
$m.d \asg (\MIN \pi \in paths(s, m) : |\pi|)$''
to both program and post-program, and also remove the \ttskp, as it is no longer needed.
$|\pi|$ is the total cost of path $\pi$, and  $paths(s, m)$ is the set of all simple paths from $s$
to $m$.
\begin{tabbing}
mmm\=mmm\=mmm\= \kill  
\lio{\spec{B \asg \rnds}}  
\lio{\spec{\FORC{t \in B} t.d \asg \delta[t] \ROF; \FORC{t \nin B} t.d \asg +\infty \ROF}}
\lio{\spec{\FORC{g \in G} \FORC{b \in B \land b \ar g} relax(b,g) \ROF \ROF}}

\lio{\reg{m \asg \{ g \sp g.d = (\MIN h \in G: h.d) \}}}
\lio{\reg{m.d \asg (\MIN \pi \in paths(s, m) : |\pi|)}}

\lio{\spec{B \asg \rnds}}  
\lio{\spec{\FORC{t \in B} t.d \asg \delta[t] \ROF; \FORC{t \nin B} t.d \asg +\infty \ROF}}
\lio{\spec{\FORC{g \in G} \FORC{b \in B \land b \ar g} relax(b,g) \ROF \ROF}}
\lio{\spec{m \asg \{ g \sp g.d = (\MIN h \in G: h.d) \}}}
\lio{\spec{m.d \asg (\MIN \pi \in paths(s, m) : |\pi|)}}
\end{tabbing}
%


Now construct a sequence of equivalences as follows:
\begin{tabbing}
mmm\=mmm\=mmm\= \kill  
\lio{\spec{\FORC{g \in G} \FORC{b \in B \land b \ar g} relax(b,g) \ROF \ROF}}
\lioc{\Peqv}{definition of relax(b,g)}
\lio{\spec{\FORC{g \in G} \FORC{b \in B \land b \ar g} g.d \asg g.d \min b.d + w(b,g) ) \ROF \ROF}}
\lioc{\Peqv}{min is commutative and associative}
\lio{\spec{\FORC{g \in G}   g.d \asg g.d \min\ (\MIN b \in B \land b \ar g: b.d + w(b,g))  \ROF \ROF}}
\lioc{\Peqv}{use the operational invariant}
\lio{\spec{\FORC{g \in G}   g.d \asg g.d \min\ (\MIN b \in B \land b \ar g: \delta[b] + w(b,g))  \ROF}}
\lioc{\Peqv}{$s \la{B} g$ denotes all paths from $s$ to $g$ containing only black nodes except $g$}
\lio{\spec{\FORC{g \in G}   g.d \asg g.d \min\ (\MIN \pi \in s \la{B} g: |\pi|)  \ROF}}
\lioc{\Peqv}{$s \la{} g$ denotes all paths from $s$ to $g$}
\lioc{}{let $M = \{g \in G : g.d = (\MIN h \in G: h.d)\}$, use algorithm assumption}
\lio{\spec{\FORC{g \in G}   g.d \asg g.d \min\ (\MIN \pi \in s \la{B} g: |\pi|)  \ROF;}}
\lio{\spec{\FORC{g \in M}   g.d \asg g.d \min\ (\MIN \pi \in s \la{B} g: |\pi|) \min\ (\MIN \pi \in s \la{} g: |\pi|) \ROF}}
\lioc{\Peqv}{definition of $\delta$}
\lio{\spec{\FORC{g \in G}   g.d \asg g.d \min\ (\MIN \pi \in s \la{B} g: |\pi|)  \ROF;}}
\lio{\spec{\FORC{g \in M}   g.d \asg \delta[g]}}

\end{tabbing}
The above equivalence allows me to conclude the following:
\begin{tabbing}
mmm\=mmm\=mmm\= \kill  
\lio{\spec{B \asg \rnds}}  
\lio{\spec{\FORC{t \in B} t.d \asg \delta[t] \ROF; \FORC{t \nin B} t.d \asg +\infty \ROF}}
\lio{\spec{\FORC{g \in G} \FORC{b \in B \land b \ar g} relax(b,g) \ROF \ROF}}
\lio{\reg{m \asg \{ g \sp g.d = (\MIN h \in G: h.d) \}}}
\lion{\reg{m.d \asg (\MIN \pi \in paths(s, m) : |\pi|)}}
\end{tabbing}
is equivalent to
\begin{tabbing}
mmm\=mmm\=mmm\= \kill  
\lio{\spec{B \asg \rnds}}  
\lio{\spec{\FORC{t \in B} t.d \asg \delta[t] \ROF; \FORC{t \nin B} t.d \asg +\infty \ROF}}
\lio{\spec{\FORC{g \in G} \FORC{b \in B \land b \ar g} relax(b,g) \ROF \ROF}}
\lion{\reg{m \asg \{ g \sp g.d = (\MIN h \in G: h.d) \}}}
\end{tabbing}
%
%
I also observe that \spec{m.d \asg (\MIN \pi \in paths(s, m) : |\pi|)}
is equivalent to \spec{m.d \asg \delta[m]} 
by definition of $\delta[m]$, the shortest path distance from source $s$ to node $m$.
Hence, by \eqSubRule I can now write
\begin{tabbing}
mmm\=mmm\=mmm\= \kill
\lio{\spec{B \asg \rnds}}  
\lio{\spec{\FORC{t \in B} t.d \asg \delta[t] \ROF; \FORC{t \nin B} t.d \asg +\infty \ROF}}
\lio{\spec{\FORC{g \in G} \FORC{b \in B \land b \ar g} relax(b,g) \ROF \ROF}}

\lio{\reg{m \asg \{ g \sp g.d = (\MIN h \in G: h.d) \}}}

\lio{\spec{B \asg \rnds}}  
\lio{\spec{\FORC{t \in B} t.d \asg \delta[t] \ROF; \FORC{t \nin B} t.d \asg +\infty \ROF}}
\lio{\spec{\FORC{g \in G} \FORC{b \in B \land b \ar g} relax(b,g) \ROF \ROF}}
\lio{\spec{m \asg \{ g \sp g.d = (\MIN h \in G: h.d) \}}}
\lion{\spec{m.d \asg \delta[m]}}
\end{tabbing}
Now color $m$ black. This causes, in general, some nodes to turn grey, and so requires the addition of the line
$\reg{\FORC{g \in G \land m \ar g} relax(m,g)}$ to preserve the truth of the operational triple.
I retain the intermediate specification program between the program statement that selects $m$, and
the statements which turn $m$ black and then relax all of $m$'s grey neighbors.
The assignment $B \asg \rndsp$ nondeterministically sets $B$ to any subset of $V$
which contains both $s$ and $m$.
\begin{tabbing}
mmm\=mmm\=mmm\= \kill  
\lio{\spec{B \asg \rnds}}  
\lio{\spec{\FORC{t \in B} t.d \asg \delta[t] \ROF; \FORC{t \nin B} t.d \asg +\infty \ROF}}
\lio{\spec{\FORC{g \in G} \FORC{b \in B \land b \ar g} relax(b,g) \ROF \ROF}}

\lio{\reg{m \asg \{ g \sp g.d = (\MIN h \in G: h.d) \}}}

\lio{\spec{B \asg \rnds}}  
\lio{\spec{\FORC{t \in B} t.d \asg \delta[t] \ROF; \FORC{t \nin B} t.d \asg +\infty \ROF}}
\lio{\spec{\FORC{g \in G} \FORC{b \in B \land b \ar g} relax(b,g) \ROF \ROF}}
\lio{\spec{m \asg \{ g \sp g.d = (\MIN h \in G: h.d) \}}}
\lio{\spec{m.d \asg \delta[m]}}

\lio{\reg{B \asg B \un \set{m};}}
\lio{\reg{\ttFORC{g \in G \land m \ar g} relax(m,g)}} 

\lio{\spec{B \asg \rndsp}}  
\lio{\spec{\FORC{t \in B} t.d \asg \delta[t] \ROF; \FORC{t \nin B} t.d \asg +\infty \ROF}}
\lion{\spec{\FORC{g \in G} \FORC{b \in B \land b \ar g} relax(b,g) \ROF \ROF}}
\end{tabbing}
This concludes the development of the loop body. The pre-program consists of making the looping
condition true followed by the specification program
\begin{tabbing}
mmm\=mmm\=mmm\= \kill  
\lio{\spec{\FORC{t \in B} t.d \asg \delta[t] \ROF; \FORC{t \nin B} t.d \asg +\infty \ROF}}
\lion{\spec{\FORC{g \in G} \FORC{b \in B \land b \ar g} relax(b,g) \ROF \ROF}}
\end{tabbing}
and the post-program has this specification program as a suffix. Hence this specification program
plays the role of an ``invariant'', and we have, by \whileLawC, the complete annotated program:
\begin{tabbing}
mmm\=mmm\=mmm\= \kill
\lio{\spec{skip}}
\lio{\reg{B \asg \set{s}; s.d \asg 0; \FORC{t \nin B} t.d \asg +\infty \ROF} }
\lio{\spec{\FORC{t \in B} t.d \asg \delta[t] \ROF; \FORC{t \nin B} t.d \asg +\infty \ROF}}

\lio{\reg{\ttWHILEC{B \ne V}}}

\lit{\spec{B \asg \rnds}}  
\lit{\spec{\FORC{t \in B} t.d \asg \delta[t] \ROF; \FORC{t \nin B} t.d \asg +\infty \ROF}}
\lit{\spec{\FORC{g \in G} \FORC{b \in B \land b \ar g} relax(b,g) \ROF \ROF}}

\lit{\reg{m \asg \{ g \sp g.d = (\MIN h \in G: h.d) \}}}

\lit{\spec{B \asg \rnds}}  
\lit{\spec{\FORC{t \in B} t.d \asg \delta[t] \ROF; \FORC{t \nin B} t.d \asg +\infty \ROF}}
\lit{\spec{\FORC{g \in G} \FORC{b \in B \land b \ar g} relax(b,g) \ROF \ROF}}
\lit{\spec{m \asg \{ g \sp g.d = (\MIN h \in G: h.d) \}}}
\lit{\spec{m.d \asg \delta[m]}}

\lit{\reg{B \asg B \un \set{m}}}
\lit{\reg{\ttFORC{g \in G \land m \ar g} relax(m,g)}} 

\lit{\spec{B \asg \rndsp}}  
\lit{\spec{\FORC{t \in B} t.d \asg \delta[t] \ROF; \FORC{t \nin B} t.d \asg +\infty \ROF}}
\lit{\spec{\FORC{g \in G} \FORC{b \in B \land b \ar g} relax(b,g) \ROF \ROF}}

\lio{\reg{\ttELIHW}}
\lio{\spec{B \asg V}}
\lio{\spec{\FORC{t \in B} t.d \asg \delta[t] \ROF; \FORC{t \nin B} t.d \asg +\infty \ROF}}
\lion{\spec{\FORC{g \in G} \FORC{b \in B \land b \ar g} relax(b,g) \ROF \ROF}}
\end{tabbing}
By applying \postRl, I obtain
\begin{tabbing}
mmm\=mmm\=mmm\= \kill
\lio{\spec{skip}}
\lio{\reg{B \asg \set{s}; s.d \asg 0; \FORC{t \nin B} t.d \asg +\infty \ROF} }
\lio{\spec{\FORC{t \in B} t.d \asg \delta[t] \ROF; \FORC{t \nin B} t.d \asg +\infty \ROF}}

\lio{\reg{\ttWHILEC{B \ne V}}}

\lit{\spec{B \asg \rnds}}  
\lit{\spec{\FORC{t \in B} t.d \asg \delta[t] \ROF; \FORC{t \nin B} t.d \asg +\infty \ROF}}
\lit{\spec{\FORC{g \in G} \FORC{b \in B \land b \ar g} relax(b,g) \ROF \ROF}}

\lit{\reg{m \asg \{ g \sp g.d = (\MIN h \in G: h.d) \}}}

\lit{\spec{B \asg \rnds}}  
\lit{\spec{\FORC{t \in B} t.d \asg \delta[t] \ROF; \FORC{t \nin B} t.d \asg +\infty \ROF}}
\lit{\spec{\FORC{g \in G} \FORC{b \in B \land b \ar g} relax(b,g) \ROF \ROF}}
\lit{\spec{m \asg \{ g \sp g.d = (\MIN h \in G: h.d) \}}}
\lit{\spec{m.d \asg \delta[m]}}

\lit{\reg{B \asg B \un \set{m}}}
\lit{\reg{\ttFORC{g \in G \land m \ar g} relax(m,g)}} 

\lit{\spec{B \asg \rndsp}}  
\lit{\spec{\FORC{t \in B} t.d \asg \delta[t] \ROF; \FORC{t \nin B} t.d \asg +\infty \ROF}}
\lit{\spec{\FORC{g \in G} \FORC{b \in B \land b \ar g} relax(b,g) \ROF \ROF}}

\lio{\reg{\ttELIHW}}
\lion{\spec{\FORC{t \in V} t.d \asg \delta[t] \ROF}}
\end{tabbing}
The post-program sets the $t.d$ variable for every node $t$ to the
correct shortest path value.

   \newcommand{\rndl}{[0:\l-3]}
\newcommand{\rndlp}{[1:\l-2]}

\subsection{Example: in-place list reversal}

I now illustrate the use of operational annotations to derive a correct
algorithm for the in-place reversal of a linked list.
The input is a size $\l > 0$ array $n$ of objects of type $Node$, indexed
from 0 to $\l-1$.
$Node$ is declared as follows:
$\spec{class\ Node \{ Node\ p;\ \mbox{other fields} \ldots\}}$.
Element $i$ is referred to as $n_i$ instead of $n[i]$, and
contains a pointer $n_i.p$, and possibly other (omitted) fields.
The use of this array is purely for specification purposes, so that I
can go through the nodes and construct the linked list, which is then
reversed. An array also ensures that there is no aliasing: all elements
are distinct, by construction.
Array $n$ is created by executing {\spec{Node[\,]\ n \asg new\ Node[\l]}}.
The pre-program and post-program both start with code to declare
$Node$, followed by the above line to create array $n$.
I omit this code as including it would be repetitive
and would add clutter.

I start by applying \emptPrg. The pre (and post) programs do 3 things:
(1) construct the linked list, (2) reverse part of the list, up to
position $i+1$, and (3) maintain 3 pointers, into positions $i+1$,
$i+2$, and $i+3$. 
\reset
\begin{tabbing}
mmm\=mmm\=mmm\= \kill
\lio{\spec{ i \asg \rndl }}
\lio{\spec{ \FORC{j = 0 \TO \l-1}  n_j.p \asg n_{j+1} \ROF }}
\lio{\spec{ \FORC{j = i+1 \DOWNTO 1}  n_j.p \asg n_{j-1}; \ROF }}
\lio{\spec{ r \asg n_{i+1}; s \asg n_{i+2}; t \asg n_{i+3} }}

\lio{\reg{ \ttskp }}

\lio{\spec{ i \asg \rndl }}
\lio{\spec{ \FORC{j = 0 \TO \l-1}  n_j.p \asg n_{j+1} \ROF }}
\lio{\spec{ \FORC{j = i+1 \DOWNTO 1}  n_j.p \asg n_{j-1}; \ROF }}
\lion{\spec{ r \asg n_{i+1}; s \asg n_{i+2}; t \asg n_{i+3} }}
\end{tabbing}
Now unwind the last iteration of the second \FOR loop of the
pre-program.
So, by \eqSubRule
\begin{tabbing}
mmm\=mmm\=mmm\= \kill
\lio{\spec{ i \asg \rndl }}
\lio{\spec{ \FORC{j = 0 \TO \l-1}  n_j.p \asg n_{j+1} \ROF }}
\lio{\spec{ \FORC{j = i \DOWNTO 1}  n_j.p \asg n_{j-1}; \ROF }}
\lio{\spec{ n_{i+1}.p \asg n_{i}; }}
\lio{\spec{ r \asg n_{i+1}; s \asg n_{i+2}; t \asg n_{i+3} }}

\lio{\reg{ \ttskp }}

\lio{\spec{ i \asg \rndl }}
\lio{\spec{ \FORC{j = 0 \TO \l-1}  n_j.p \asg n_{j+1} \ROF }}
\lio{\spec{ \FORC{j = i+1 \DOWNTO 1}  n_j.p \asg n_{j-1}; \ROF }}
\lion{\spec{ r \asg n_{i+1}; s \asg n_{i+2}; t \asg n_{i+3} }}
\end{tabbing}
Now introduce the line
$r \asg n_{i}; s \asg n_{i+1}; t \asg n_{i+2}$ into the pre-program. Since $r, s, t$ are subsequently overwritten, and not referenced in the interim,
this preserves equivalence of the pro-program with its previous version.
So, by \eqSubRule
\begin{tabbing}
mmm\=mmm\=mmm\= \kill  
\lio{\spec{ i \asg \rndl }}
\lio{\spec{ \FORC{j = 0 \TO \l-1}  n_j.p \asg n_{j+1} \ROF }}
\lio{\spec{ \FORC{j = i \DOWNTO 1}  n_j.p \asg n_{j-1}; \ROF }}
\lio{\spec{ r \asg n_{i}; s \asg n_{i+1}; t \asg n_{i+2} }}
\lio{\spec{ n_{i+1}.p \asg n_{i}; }}
\lio{\spec{ r \asg n_{i+1}; s \asg n_{i+2}; t \asg n_{i+3} }}

\lio{\reg{ \ttskp }}

\lio{\spec{ i \asg \rndl }}
\lio{\spec{ \FORC{j = 0 \TO \l-1}  n_j.p \asg n_{j+1} \ROF }}
\lio{\spec{ \FORC{j = i+1 \DOWNTO 1}  n_j.p \asg n_{j-1}; \ROF }}
\lion{\spec{ r \asg n_{i+1}; s \asg n_{i+2}; t \asg n_{i+3} }}
\end{tabbing}
Now apply \trading to trade into the program, and remove the $\skp$ since it is no longer needed.
\begin{tabbing}
mmm\=mmm\=mmm\= \kill  
\lio{\spec{ i \asg \rndl }}
\lio{\spec{ \FORC{j = 0 \TO \l-1}  n_j.p \asg n_{j+1} \ROF }}
\lio{\spec{ \FORC{j = i \DOWNTO 1}  n_j.p \asg n_{j-1}; \ROF }}
\lio{\spec{ r \asg n_{i}; s \asg n_{i+1}; t \asg n_{i+2} }}

\lio{\reg{ n_{i+1}.p \asg n_{i}; }}
\lio{\reg{ r \asg n_{i+1}; s \asg n_{i+2}; t \asg n_{i+3} }}

\lio{\spec{ i \asg \rndl }}
\lio{\spec{ \FORC{j = 0 \TO \l-1}  n_j.p \asg n_{j+1} \ROF }}
\lio{\spec{ \FORC{j = i+1 \DOWNTO 1}  n_j.p \asg n_{j-1}; \ROF }}
\lion{\spec{ r \asg n_{i+1}; s \asg n_{i+2}; t \asg n_{i+3} }}
\end{tabbing}
Since $\spec{r \asg n_{i}; s \asg n_{i+1}}$ immediately precedes $\reg{n_{i+1}.p \asg n_{i}}$, I can
replace $\reg{n_{i+1}.p \asg n_{i}}$ by $\reg{s.p \asg r}$ while retaining equivalence.
So, by \eqSubRule
\begin{tabbing}
mmm\=mmm\=mmm\= \kill  
\lio{\spec{ i \asg \rndl }}
\lio{\spec{ \FORC{j = 0 \TO \l-1}  n_j.p \asg n_{j+1} \ROF }}
\lio{\spec{ \FORC{j = i \DOWNTO 1}  n_j.p \asg n_{j-1}; \ROF }}
\lio{\spec{ r \asg n_{i}; s \asg n_{i+1}; t \asg n_{i+2} }}

\lio{\reg{ s.p \asg r; }}
\lio{\reg{ r \asg n_{i+1}; s \asg n_{i+2}; t \asg n_{i+3} }}

\lio{\spec{ i \asg \rndl }}
\lio{\spec{ \FORC{j = 0 \TO \l-1}  n_j.p \asg n_{j+1} \ROF }}
\lio{\spec{ \FORC{j = i+1 \DOWNTO 1}  n_j.p \asg n_{j-1}; \ROF }}
\lion{\spec{ r \asg n_{i+1}; s \asg n_{i+2}; t \asg n_{i+3} }}
\end{tabbing}
Since  $\spec{s \asg n_{i+1}}$ precedes $\reg{ r \asg n_{i+1}}$ and $s$ is not modified in the interim, I
can replace $\reg{ r \asg n_{i+1}}$ by  $\reg{ r \asg s}$  while retaining equivalence.
So, by \eqSubRule
\begin{tabbing}
mmm\=mmm\=mmm\= \kill
\lio{\spec{ i \asg \rndl }}
\lio{\spec{ \FORC{j = 0 \TO \l-1}  n_j.p \asg n_{j+1} \ROF }}
\lio{\spec{ \FORC{j = i \DOWNTO 1}  n_j.p \asg n_{j-1}; \ROF }}
\lio{\spec{ r \asg n_{i}; s \asg n_{i+1}; t \asg n_{i+2} }}

\lio{\reg{ s.p \asg r; }}
\lio{\reg{ r \asg s; s \asg n_{i+2}; t \asg n_{i+3} }}

\lio{\spec{ i \asg \rndl }}
\lio{\spec{ \FORC{j = 0 \TO \l-1}  n_j.p \asg n_{j+1} \ROF }}
\lio{\spec{ \FORC{j = i+1 \DOWNTO 1}  n_j.p \asg n_{j-1}; \ROF }}
\lion{\spec{ r \asg n_{i+1}; s \asg n_{i+2}; t \asg n_{i+3} }}
\end{tabbing}
In a similar manner, I can replace $\reg{s \asg n_{i+2}}$ by  $\reg{s \asg t}$.
So, by \eqSubRule
\begin{tabbing}
mmm\=mmm\=mmm\= \kill  
\lio{\spec{ i \asg \rndl }}
\lio{\spec{ \FORC{j = 0 \TO \l-1}  n_j.p \asg n_{i+1} \ROF }}
\lio{\spec{ \FORC{j = i \DOWNTO 1}  n_j.p \asg n_{j-1}; \ROF }}
\lio{\spec{ r \asg n_{i}; s \asg n_{i+1}; t \asg n_{i+2} }}

\lio{\reg{ s.p \asg r; }}
\lio{\reg{ r \asg s; s \asg t; t \asg n_{i+3} }}

\lio{\spec{ i \asg \rndl }}
\lio{\spec{ \FORC{j = 0 \TO \l-1}  n_j.p \asg n_{i+1} \ROF }}
\lio{\spec{ \FORC{j = i+1 \DOWNTO 1}  n_j.p \asg n_{j-1}; \ROF }}
\lion{\spec{ r \asg n_{i+1}; s \asg n_{i+2}; t \asg n_{i+3} }}
\end{tabbing}
From $\spec{ \FORC{j = 0 \TO \l-1}  n_j.p \asg n_{i+1} \ROF }$, I have $\spec{n_{i+2}.p \asg n_{i+3}}$, and I observe
that $n_{i+2}.p$ is not subsequently modified. 
Also I have $\spec{ t \asg n_{i+2} }$ occurring before the program, and $t$ is not modified until
$\reg{t \asg n_{i+3}}$, we can replace $\reg{t \asg n_{i+3}}$ by $\reg{t \asg t.p}$.
So, by \eqSubRule
\begin{tabbing}
mmm\=mmm\=mmm\= \kill
\lio{\spec{ i \asg \rndl }}
\lio{\spec{ \FORC{j = 0 \TO \l-1}  n_j.p \asg n_{j+1} \ROF }}
\lio{\spec{ \FORC{j = i \DOWNTO 1}  n_j.p \asg n_{j-1}; \ROF }}
\lio{\spec{ r \asg n_{i}; s \asg n_{i+1}; t \asg n_{i+2} }}

\lio{\reg{ s.p \asg r; }}
\lio{\reg{ r \asg s; s \asg t; t \asg t.p }}

\lio{\spec{ i \asg \rndl }}
\lio{\spec{ \FORC{j = 0 \TO \l-1}  n_j.p \asg n_{j+1} \ROF }}
\lio{\spec{ \FORC{j = i+1 \DOWNTO 1}  n_j.p \asg n_{j-1}; \ROF }}
\lion{\spec{ r \asg n_{i+1}; s \asg n_{i+2}; t \asg n_{i+3} }}
\end{tabbing}
Now apply \whileLaw to obtain the complete program, while also
incrementing the loop counter $i$ at the end of the loop body.
\begin{tabbing}
mmm\=mmm\=mmm\= \kill  
\lio{\spec{ \FORC{j = 0 \TO \l-1}  n_j.p \asg n_{j+1} \ROF }}

\lio{\reg{ r \asg n_0; s \asg n_{1}; t \asg n_{2} }}

\lio{\spec{ i \asg 0 }}
\lio{\spec{ \FORC{j = 0 \TO \l-1}  n_j.p \asg n_{i+1} \ROF }}
\lio{\spec{ \FORC{j = i \DOWNTO 1}  n_j.p \asg n_{j-1}; \ROF }}
\lio{\spec{ r \asg n_{i}; s \asg n_{i+1}; t \asg n_{i+2} }}

\lio{\reg{\ttWHILEC{i \ne \l-2}}}

\lit{\spec{ i \asg \rndl }}
\lit{\spec{ \FORC{j = 0 \TO \l-1}  n_j.p \asg n_{j+1} \ROF }}
\lit{\spec{ \FORC{j = i \DOWNTO 1}  n_j.p \asg n_{j-1}; \ROF }}
\lit{\spec{ r \asg n_{i}; s \asg n_{i+1}; t \asg n_{i+2} }}

\lit{\reg{ s.p \asg r; }}
\lit{\reg{ r \asg s; s \asg t; t \asg t.p }}

\lit{\spec{ i \asg \rndl }}
\lit{\spec{ \FORC{j = 0 \TO \l-1}  n_j.p \asg n_{j+1} \ROF }}
\lit{\spec{ \FORC{j = i+1 \DOWNTO 1}  n_j.p \asg n_{j-1}; \ROF }}
\lit{\spec{ r \asg n_{i+1}; s \asg n_{i+2}; t \asg n_{i+3} }}

\lit{\reg{ i \asg i+1; }}

\lit{\spec{ i \asg \rndlp }}
\lit{\spec{ \FORC{j = 0 \TO \l-1}  n_j.p \asg n_{j+1} \ROF }}
\lit{\spec{ \FORC{j = i \DOWNTO 1}  n_j.p \asg n_{j-1}; \ROF }}
\lit{\spec{ r \asg n_{i}; s \asg n_{i+1}; t \asg n_{i+2} }}

\lio{\reg{\ttELIHW}}

\lio{\spec{i \asg \l-2}}
\lio{\spec{ r \asg n_{i}; s \asg n_{i+1}; t \asg n_{i+2} }}
\lio{\spec{ \FORC{j = 0 \TO \l-1}  n_j.p \asg n_{j+1} \ROF }}
\lion{\spec{ \FORC{j = i \DOWNTO 1}  n_j.p \asg n_{j-1}; \ROF }}
\end{tabbing}

The loop must terminate at $i=\l-2$ to avoid dereferencing $\nil$.
The post-program of the loop then gives 
$\spec{ \FORC{j = \l-2 \DOWNTO 1}  n_j.p \asg n_{j-1}; \ROF }$,
which means that the last node's pointer is not set to the previous node, since
the list ends at index $\l-1$.
Hence we require a final assignment that is equivalent to
$n_{\l-1}.p \asg n_{\l-2}$.

The post-program of the loop gives 
\spec{ i \asg \l-2; r \asg n_{i}; s \asg n_{i+1}; t \asg n_{i+2} }, which yields
\spec{ i \asg \l-2; r \asg n_{\l-2}; s \asg n_{\l-1}; t \asg n_{\l} }.
Hence the last assignment can be rendered as $s.p \asg r$.
Also, the loop termination condition can be rewritten as $t \ne \nil$,
since $t$ becomes \nil when it is assigned $n_{\l}$, which happens exactly when $i$ becomes $\l-2$.
Hence, using \eqSubRule and \postRl, I obtain

\begin{tabbing}
mmm\=mmm\=mmm\= \kill  
\lio{\spec{ \FORC{j = 0 \TO \l-1}  n_j.p \asg n_{j+1} \ROF }}

\lio{\reg{ r \asg n_0; s \asg n_{1}; t \asg n_{2} }}

\lio{\spec{ i \asg 0 }}
\lio{\spec{ \FORC{j = 0 \TO \l-1}  n_j.p \asg n_{i+1} \ROF }}
\lio{\spec{ \FORC{j = i \DOWNTO 1}  n_j.p \asg n_{j-1}; \ROF }}
\lio{\spec{ r \asg n_{i}; s \asg n_{i+1}; t \asg n_{i+2} }}

\lio{\reg{\ttWHILEC{t \ne \nil}}}

\lit{\spec{ i \asg \rndl }}
\lit{\spec{ \FORC{j = 0 \TO \l-1}  n_j.p \asg n_{j+1} \ROF }}
\lit{\spec{ \FORC{j = i \DOWNTO 1}  n_j.p \asg n_{j-1}; \ROF }}
\lit{\spec{ r \asg n_{i}; s \asg n_{i+1}; t \asg n_{i+2} }}

\lit{\reg{ s.p \asg r; }}
\lit{\reg{ r \asg s; s \asg t; t \asg t.p }}

\lit{\spec{ i \asg \rndl }}
\lit{\spec{ \FORC{j = 0 \TO \l-1}  n_j.p \asg n_{j+1} \ROF }}
\lit{\spec{ \FORC{j = i+1 \DOWNTO 1}  n_j.p \asg n_{j-1}; \ROF }}
\lit{\spec{ r \asg n_{i+1}; s \asg n_{i+2}; t \asg n_{i+3} }}

\lit{\reg{ i \asg i+1; }}

\lit{\spec{ i \asg \rndlp }}
\lit{\spec{ \FORC{j = 0 \TO \l-1}  n_j.p \asg n_{j+1} \ROF }}
\lit{\spec{ \FORC{j = i \DOWNTO 1}  n_j.p \asg n_{j-1}; \ROF }}
\lit{\spec{ r \asg n_{i}; s \asg n_{i+1}; t \asg n_{i+2} }}

\lio{\reg{\ttELIHW}}

\lio{\spec{i \asg \l-2}}
\lio{\spec{ r \asg n_{i}; s \asg n_{i+1}; t \asg n_{i+2} }}
\lio{\spec{ \FORC{j = 0 \TO \l-1}  n_j.p \asg n_{j+1} \ROF }}
\lio{\spec{ \FORC{j = i \DOWNTO 1}  n_j.p \asg n_{j-1}; \ROF }}

\lio{\reg{ s.p \asg r; }}

\lio{\spec{ \FORC{j = 0 \TO \l-1}  n_j.p \asg n_{j+1} \ROF }}
\lio{\spec{ \FORC{j = \l-1 \DOWNTO 1}  n_j.p \asg n_{j-1}; \ROF }}
\end{tabbing}
Now, as desired, the ``loop counter'' $i$ is no longer needed as a program variable, and can be
converted to an auxiliary (``ghost'') variable. The result is
\begin{tabbing}
mmm\=mmm\=mmm\= \kill  
\lio{\spec{ \FORC{j = 0 \TO \l-1}  n_j.p \asg n_{j+1} \ROF }}

\lio{\reg{ r \asg n_0; s \asg n_{1}; t \asg n_{2} }}

\lio{\spec{ i \asg 0 }}
\lio{\spec{ \FORC{j = 0 \TO \l-1}  n_j.p \asg n_{i+1} \ROF }}
\lio{\spec{ \FORC{j = i \DOWNTO 1}  n_j.p \asg n_{j-1}; \ROF }}
\lio{\spec{ r \asg n_{i}; s \asg n_{i+1}; t \asg n_{i+2} }}

\lio{\reg{\ttWHILEC{t \ne \nil}}}

\lit{\spec{ i \asg \rndl }}
\lit{\spec{ \FORC{j = 0 \TO \l-1}  n_j.p \asg n_{j+1} \ROF }}
\lit{\spec{ \FORC{j = i \DOWNTO 1}  n_j.p \asg n_{j-1}; \ROF }}
\lit{\spec{ r \asg n_{i}; s \asg n_{i+1}; t \asg n_{i+2} }}

\lit{\reg{ s.p \asg r; }}
\lit{\reg{ r \asg s; s \asg t; t \asg t.p }}


\lit{\spec{ i \asg \rndlp }}
\lit{\spec{ \FORC{j = 0 \TO \l-1}  n_j.p \asg n_{j+1} \ROF }}
\lit{\spec{ \FORC{j = i \DOWNTO 1}  n_j.p \asg n_{j-1}; \ROF }}
\lit{\spec{ r \asg n_{i}; s \asg n_{i+1}; t \asg n_{i+2} }}

\lio{\reg{\ttELIHW}}

\lio{\spec{i \asg \l-2}}
\lio{\spec{ r \asg n_{i}; s \asg n_{i+1}; t \asg n_{i+2} }}
\lio{\spec{ \FORC{j = 0 \TO \l-1}  n_j.p \asg n_{j+1} \ROF }}
\lio{\spec{ \FORC{j = i \DOWNTO 1}  n_j.p \asg n_{j-1}; \ROF }}

\lio{\reg{ s.p \asg r; }}

\lio{\spec{ \FORC{j = 0 \TO \l-1}  n_j.p \asg n_{j+1} \ROF }}
\lio{\spec{ \FORC{j = \l-1 \DOWNTO 1}  n_j.p \asg n_{j-1}; \ROF }}
\end{tabbing}
Upon termination, \reg{s} points to the head of the reversed list.
The post-program is quite pleasing: it constructs the list, and then immediately reverses it!

\section{Operational annotations with non-recursive procedures}

Let $pname$ be a non-recursive procedure with parameter passing by
value, and with body $pbody$.
Let $\b{a}$ denote a list of actual parameters, and
let $\b{f}$ denote a list of formal parameters.
An actual parameter is either an object identifier or an expression over
primitive types, and a formal parameter is either an
object identifier or a primitive-type identifier.
I handle non-recursive procedures in a similar manner to Hoare logic
verification rules for non-recursive procedures
\cite{Fa92}: formal parameters are replaced by actual parameters.
The most convenient expression of this principle is as an equivalence between a procedure call
and an instance of the procedure body with the appropriate assignment
of actuals to formals.
The equivalence rules for non-recursive procedure calls are:

%
\begin{equation}
  pname(\b{a}) \;\ev\; \b{f} \asg \b{a}; pbody \tag{\eqNonRV}
\end{equation}
\begin{equation}
r \asg pname(\b{a}) \;\ev\; \b{f} \asg \b{a}; pbody[ r \asg e/\RET e] \tag{\eqNonR}
\end{equation}

Rule \eqNonR is for when no value is returned (procedure type is void), or the returned
value is not saved by being assigned to a variable (return value is
``thrown away''). In this case, the form of the procedure call is
$pname(\b{a})$, \ie the procedure name followed by the actual
parameter list. Rule \eqNonR then assigns the formals to the actuals
(I assume a multiple assignment, with the obvious semantics) and
executes the procedure body.

Rule \eqNonR is for when the returned value is assigned to a variable.  In this case, the form of the procedure call is
$r \asg pname(\b{a})$, \ie an assignment statement in which the value
returned by the call  $pname(\b{a})$ is saved in variable $r$.
Rule \eqNonR then assigns the formals to the actuals
(I assume a multiple assignment, with the obvious semantics) and
executes a modified procedure body in which

I now give two examples of application of the above rules: selection
sort and linked list reversal.
First, consider the selection sort algorithm developed above,
packaged as a procedure:
\begin{tabbing}
  mmm\=mmm\=mmm\= \kill
\lio{\reg{void\ Procedure\ sort(int[]\ a,\ int\ n) \{}}
\lit{\reg{i \asg 0}}
\lit{\reg{ \ttWHILEC{i \ne n-1} }}
\lih{\reg{ J \asg \lrm(i:n-1); }}
\lih{\reg{ a[i] \swp a[J] }}
\lih{\reg{ i \asg i+1 }}
\lit{\reg{ \ttELIHW }}
\lion{\}}
\end{tabbing}
Now consider a call $sort(b,m)$ where $b$ is an array of length $m$.
By (\eqNonRV), I have

\begin{tabbing}
  mmm\=mmm\=mmm\= \kill
\lio{\reg{sort(b,m) \;\ev\;\{}} 
\lit{a := b; n := m;}
\lit{\reg{i \asg 0}}
\lit{\reg{ \ttWHILEC{i \ne n-1} }}
\lih{\reg{ J \asg \lrm(a, [0:i-1]); }}
\lih{\reg{ a[i] \swp a[J] }}
\lih{\reg{ i \asg i+1 }}
\lit{\reg{ \ttELIHW }}
\litn{\reg{\}}}
\end{tabbing}
This clearly has the correct effect. 
Now, consider the linked list reversal algorithm above, also
packaged as a procedure:
\begin{tabbing}
mm\=mmm\=mmm\= \kill  
\lio{\reg{Node\ Procedure\ listRev(Node\ h)}\ \{}
\lit{\reg{ r \asg h; s \asg h.next; t \asg h.next.next; }}
\lit{\reg{\ttWHILEC{t \ne \nil}}}
\lih{\reg{ s.p \asg r; }}
\lih{\reg{ r \asg s; s \asg t; t \asg t.p }}
\lit{\reg{\ttELIHW;}}
\lit{\reg{ s.p \asg r; }}
\lit{\reg{ \ttRET s }}
\lion{\reg{\}}}
\end{tabbing}
Now consider a call $v \asg listRev(\l)$ where $\l$ is the head of a linked list.
By (\eqNonR), I have:
\begin{tabbing}
mmm\=mmm\=mmm\= \kill  
\lio{\reg{v \asg listRev(\l) \;\ev \{}}
\lit{\reg{h \asg \l;}}
\lit{\reg{ r \asg h; s \asg h.next; t \asg h.next.next; }}
\lit{\reg{\ttWHILEC{t \ne \nil}}}
\lih{\reg{ s.p \asg r; }}
\lih{\reg{ r \asg s; s \asg t; t \asg t.p }}
\lit{\reg{\ttELIHW;}}
\lit{\reg{ s.p \asg r; }}
\lit{\reg{ v \asg s }}
\litn{\}}
\end{tabbing}
Again, this gives the correct effect for the procedure call $listRev(\l)$,
namely that $v$ points to the head of the reversed list. 
Note that both primitive and
reference types (as parameters) are handled correctly, and need not be
distinguished in the above rules.

\newcommand{\ctr}[1]{\ensuremath{\mathit{ct}{(#1)}}}
\newcommand{\cti}[1]{\ensuremath{\mathit{cti}{(#1)}}}
\newcommand{\tinsrt}[1]{\ensuremath{\mathit{insert}{(#1)}}}

\newcommand{\fset}{\ensuremath{\psi}}     
\newcommand{\aset}{\ensuremath{\vphi}}   
\newcommand{\fkey}{\ensuremath{k}}         
\newcommand{\akey}{\ensuremath{k}}   
\newcommand{\frt}{\ensuremath{T}}     
\newcommand{\fT}{\ensuremath{T}}     
\newcommand{\aT}{\ensuremath{T}}   
\newcommand{\art}{\ensuremath{T}}   

\section{Operational annotations with recursive procedures}

For recursive procedures, it is not sufficient to simply replace the
call by the body, since the body contains recursive instances of the
call. Clearly an inductive proof method is needed. What I use is an
inductive rule to establish the equivalence between a sequence of two
procedure calls and a third procedure call. These correspond,
respectively, to the pre-program, the program, and the post-program.
The method is as follows:
\bn

\item Let $\PrPR, \PrR, \PsPr$ be recursive procedures which give, respectively, the pre-program,
  program, and post-program

\item To establish $\PrPR; \PrR \ev \PsPr$ I proceed as follows:
  \bn
  \item Start with $\PrPR; \PrR$, replace the calls by their corresponding bodies, and then use
    sequence of equivalence-preserving transformations to bring the recursive calls of $\PrPR'; \PrR'$
    next to each other
  \item Use the inductive hypothesis for equivalence of the recursive calls $\PrPR'; \PrR' \ev \PsPr'$
    to replace $\PrPR'; \PrR'$ by $\PsPr'$
   \item Use more equivalence-preserving transformations to show that the resulting program is
     equivalent to $\PsPr$
  \en

\en  

The appropriate rule of inference is as follows:

\begin{equation}
\frac{
\begin{array}{l}
\PrPR'; \PrR' \ev \PsPr'\ \yld\  \PrPR; \PrR \ev \PsPr
\end{array}
}
   {\PrPR; \PrR \ev \PsPr}
\tag{\eqR}
\end{equation}

\noindent
where $\yld$ means ``is deducible from'', as usual.
This states that if we can prove  $\PrPR; \PrR \ev \PsPr$ (pre-program
followed by program is equivalent to post-program) by assuming $\PrPR'; \PrR' \ev \PsPr'$ 
(a recursive invocation of the pre-program followed by a recursive
invocation of the program is equivalent to a recursive invocation
of the post-program), then we can conclude, by induction on recursive
calls, that $\PrPR; \PrR \ev \PsPr$.

I illustrate this approach with an example which verifies the standard recursive algorithm for
inserting a node into a binary search tree (BST).
Each node $n$ in the tree consists of three
fields: $n.\key$ gives the key value for node $n$, $n.\l$ points to
the left child of $n$ (if any), and $n.r$ points to the right child of
$N$ (if any). The constructor $\NODE(v)$ returns a new node with key
value $v$ and null left and right child pointers.
I assume that all key values in the tree are unique.

The procedure $\tinsrt{t,v}$ gives the standard recursive algorithm
for insertion of key $v$ into a BST with root $t$.

\begin{tabbing}
m\=mmm\=mmm\= \kill  
\reg{\ul{\tinsrt{\fT,\fkey}}}::\+\\[1ex]
\reg{\ttIFC{\fT = \nil} \fT \asg {\ttNEW\ \NODE(\fkey)};}\\
\reg{\ttELSFC{\fkey < \frt.\key} \tinsrt{\fT.\l, \fkey};}\\
\reg{\ttELSE                              \tinsrt{\fT.r, \fkey}}; \`//\reg{\fkey > \fT.\key}\\
\end{tabbing}

To verify the correctness of $\tinsrt{\frt,\fkey}$, I define two recursive
procedures as follows.

The recursive procedure $\ctr{\fT,\fset}$ takes a set $\fset$ of key values,
and constructs a random binary search tree which contains exactly
these values, and sets $\fT$ to point to the root of this tree.
The statement $x \asg select\ in\ \fset$ selects a random
value in $\fset$ and assigns it to $x$.

The recursive procedure $\cti{\fT,\fset,\fkey}$ takes a set $\fset$ of key
values and a key value $k \not\in \fset$, constructs a random
binary search tree which contains exactly the values in $\fset$
together with the key $\fkey$, and where $\fkey$ is a leaf node, and
sets $\fT$ to point to the root of this tree.

\begin{tabbing}
m\=mmm\=mmm\= \kill  
\spec{\ul{\ctr{\fT,\fset}}}::\+\\[1ex]
\spec{\IFC{\fset = \emptyset} \fT \asg \nil;}\\
\spec{\ELSE}\+\\
\spec{x \asg select\ in\ \fset;}\\
\spec{\fset \asg \fset - x;}\\
\spec{\NODE\ \fT \asg \NEW\ \NODE(x);}\\
\spec{\ctr{ \fT.\l, \set{ y \stt y \in \fset \land y < x}};}\\
\spec{\ctr{ \fT.r, \set{ y \stt y \in \fset \land y > x}};}
\end{tabbing}

\begin{tabbing}
m\=mmm\=mmm\= \kill  
\spec{\ul{\cti{\fT,\fset,\fkey}}}::\+\\[1ex]
\spec{\IFC{\fset = \emptyset} \fT \asg {\NEW\ \NODE(k)}};\\
\spec{\ELSE}\+\\
\spec{x \asg select\ in\ \fset;}\\
\spec{\fset \asg \fset - x;}\\
\spec{\NODE\ n \asg \NEW\ \NODE(x);}\\

\spec{\IFC{\fkey < x}}\\
   \>\spec{\cti{\fT.\l, \set{y \stt y \in \fset \land y < x}, \fkey};}\\
   \> \spec{\ctr{\fT.r, \set{y \stt y \in \fset \land y > x}}}\\

\spec{\ELSE}\\
   \> \spec{\ctr{\fT.\l, \set{y \stt y \in \fset \land y < x}}}\\
   \> \spec{\cti{\fT.r, \set{y \stt y \in \fset\land y > x}, \fkey};}  \`//$\fkey > x$

\end{tabbing}

\noindent
I now verify
\[
\otp{\spec{\ctr{\aT, \aset}}}{\reg{\tinsrt{\aT, \akey}}}{\spec{\ctr{\aT, \aset \un \akey}}}.
\]
%
%
That is, the pre-program creates a random BST with key values in
$\aset$ and sets $\aT$ to the root, 
and the post-program creates a random BST with key values in
$\aset \un \akey$ and sets $\aT$ to the root.
Hence, the above operational triple states that the result of
$\art \asg \tinsrt{\aT, \akey}$ is to insert key value $\akey$ into the BST rooted at $\art$.
I first establish
\[
    \cti{\aT, \aset, \akey} \Pord \ctr{\aT, \aset \un \akey}. \tag{a}
\]
Intuitively, this follows since $\cti{\aT, \aset, \akey}$ constructs a BST with
key values in $\aset \un \akey$, and where $\akey$ is constrained to be a leaf
node, while \ctr{\aT, \aset \un \akey} constructs a BST with
key values in $\aset \un \akey$, with no constraint of where $\akey$ can
occur.
A formal proof proceeds by induction on the length of an arbitrary
execution $\pi$ of $\cti{\aT, \aset,\akey}$, which shows that $\pi$ is also a
possible execution of $\ctr{\aT, \aset \un \akey}$. Recall that the use of the
random selection statement means that there are, in general, many
possible executions for a given input. The details are straightforward
and are omitted.

In the sequel, I show that 
\[
  \otp{\spec{\ctr{\aT, \aset}}}{\reg{\tinsrt{\aT, \akey}}}{\spec{\cti{\aT, \aset, \akey}}} \tag{b}
\]
is valid. From (a,b) and (\postRlNb), I conclude that
\[
   \otp{\spec{\ctr{\aT, \aset}}}{\reg{\tinsrt{\aT, \akey}}}{\spec{\ctr{\aT, \aset \un \akey}}} 
 \]
is valid, as desired.
To establish (b), I show
\[
     \spec{\ctr{\aT, \aset}}; \reg{\tinsrt{\aT, \akey}} \ev \spec{\cti{\aT, \aset, \akey}}. \tag{c}
\]
from which (b) follows immediately by Definitions~\ref{def:progEquiv} and \ref{def:opTriple}.

I establish (c) by using induction on recursive calls.
I replace the above calls by the corresponding procedure bodies, and
then assume as inductive hypothesis (c) as applied to the recursive
calls within the bodies.
This is similar to the Hoare logic inference rule for partial correctness of recursive procedures \cite{Fa92}.

To be able to apply the inductive hypothesis, I take the sequential
composition 
$\spec{\ctr{\aT, \aset}}; \reg{\tinsrt{\aT, \akey}}$, 
replace each call by the corresponding procedure body, and then
I ``interleave'' the procedure bodies using commutativity of statements.
This enables me to bring the recursive calls to $ctr$ and to $tinsrt$
together, so that the inductive hypothesis can apply to their
sequential composition, which can then be replaced by the equivalent
recursive call to $cti$. This results in procedure body that
corresponds to a call of $cti$, which completes the equivalence proof.
I first replace $\spec{\ctr{\aT, \aset}}$ by the procedure body that
results from parameter binding:

\begin{tabbing}
m\=mmm\=mmm\= \+\kill  
\spec{\IFC{\aset = \emptyset} \aT \asg \nil;}\\
\spec{\ELSE}\+\\
\spec{x \asg select\ in\ \aset;}\\
\spec{\aset \asg \aset - x;}\\
\spec{\NODE\ \aT \asg \NEW\ \NODE(x);}\\
\spec{\ctr{\aT.\l, \set{ y \stt y \in \aset \land y < x}};}\\
\spec{\ctr{\aT.r, \set{ y \stt y \in \aset \land y > x}};}\\
\end{tabbing}

\noindent
I now take $\reg{\tinsrt{\aT, \akey}}$ and place it at the end
of both the if branch and the else branch, which clearly preserves
equivalence with $\spec{\ctr{\aT, \aset}}; \reg{\tinsrt{\aT, \akey}}$:

\begin{tabbing}
m\=mmm\=mmm\= \+\kill  
\spec{\IFC{\aset = \emptyset} \aT \asg \nil;} \reg{\tinsrt{\aT, \akey}}\\
\spec{\ELSE}\+\\
\spec{x \asg select\ in\ \aset;}\\
\spec{\aset \asg \aset - x;}\\
\spec{\NODE\ \aT \asg \NEW\ \NODE(x);}\\
\spec{\ctr{\aT.\l, \set{ y \stt y \in \aset \land y < x}};}\\
\spec{\ctr{\aT.r, \set{ y \stt y \in \aset \land y > x}};}\\
\reg{\tinsrt{\aT, \akey}}
\end{tabbing}

\noindent
I replace $\reg{\tinsrt{\aT, \akey}}$ by the procedure body that results from parameter binding:

\begin{tabbing}
m\=mmm\=mmm\= \+ \kill 
\reg{\ttIFC{\aT = \nil}\ \aT \asg {\ttNEW\ \NODE(\akey)}};\\
\reg{\ttELSFC{\akey < \aT.val} \tinsrt{\aT.\l, \akey}}\\
\reg{\ttELSE \tinsrt{\aT.r, \akey}}       \`//$\akey > \art.val$
\end{tabbing}

\noindent
In the if branch, I have $\spec{\aT \asg \nil}$, and so the body of
$\reg{\tinsrt{\aT, \akey}}$ simplifies to  

$\reg{\aT \asg {\ttNEW\ \NODE(\akey)}}$.

\noindent
In the else branch, I have \spec{\NODE\ \aT \asg \NEW\ \NODE(x);}, and
so the body of $\reg{\tinsrt{\aT, \akey}}$ simplifies to 

\begin{tabbing}
m\=mmm\=mmm\= \+ \kill  
\reg{\ttIFC{\akey < \aT.val}\ \tinsrt{\aT.\l, \akey}}\\
\reg{\ttELSE \tinsrt{\aT.r, \akey}}       \`//$\akey > \art.val$\\
\end{tabbing}

\noindent
I therefore now have

\begin{tabbing}
m\=mmm\=mmm\= \+\kill  
\spec{\IFC{\aset = \emptyset} \aT \asg \nil;}   \reg{\aT \asg {\ttNEW\ \NODE(\akey)}}      \\
\spec{\ELSE}\+\\
\spec{x \asg select\ in\ \aset;}\\
\spec{\aset \asg \aset - x;}\\
\spec{\NODE\ \aT \asg \NEW\ \NODE(x);}\\
\spec{\ctr{\aT.\l, \set{ y \stt y \in \aset \land y < x}};}\\
\spec{\ctr{\aT.r, \set{ y \stt y \in \aset \land y > x}};}\\
\reg{\ttIFC{\akey < \aT.val} \tinsrt{\aT.\l, \akey}}\\
\reg{\ttELSE \tinsrt{\aT.r, \akey}}       \`//$\akey > \art.val$
\end{tabbing}

\noindent
It is immediate that 
${\aT \asg \nil;}   \reg{\aT \asg {\ttNEW\ \NODE(\akey)}} \ev  \reg{\aT \asg {\ttNEW\ \NODE(\akey)}}$.
I also move 

\begin{tabbing}
m\=mmm\=mmm\= \+\kill  
\spec{\ctr{\aT.\l, \set{ y \stt y \in \aset \land y < x}};}\\
\spec{\ctr{\aT.r, \set{ y \stt y \in \aset \land y > x}};}
\end{tabbing}

\noindent
down into both branches of the following $\IF$ statement:

\begin{tabbing}
m\=mmm\=mmm\= \+\kill  
\spec{\IFC{\aset = \emptyset}}   \reg{\aT \asg {\ttNEW\ \NODE(\akey)}}      \\
\spec{\ELSE}\+\\
\spec{x \asg select\ in\ \aset;}\\
\spec{\aset \asg \aset - x;}\\
\spec{\NODE\ \aT \asg \NEW\ \NODE(x);}\\

\reg{\ttIFC{\akey < \aT.val}} \+\\
   \reg{\tinsrt{\aT.\l, \akey}}\\
   \spec{\ctr{\aT.\l, \set{ y \stt y \in \aset \land y < x}};}\\
   \spec{\ctr{\aT.r, \set{ y \stt y \in \aset \land y > x}};}\-\\

\reg{\ttELSE} \+     \`//$\akey > \art.val$\\
   \tinsrt{\aT.r, \akey}\\
   \spec{\ctr{\aT.\l, \set{ y \stt y \in \aset \land y < x}};}\\
   \spec{\ctr{\aT.r, \set{ y \stt y \in \aset \land y > x}};}
\end{tabbing}

\noindent
Since $\tinsrt{\aT.r, \akey}$ and $\spec{\ctr{\aT.\l, \set{ y \stt y \in \aset \land y < x}};}$ 
commute, the above is equivalent to

\begin{tabbing}
m\=mmm\=mmm\= \+\kill  
\spec{\IFC{\aset = \emptyset}}   \reg{\aT \asg {\ttNEW\ \NODE(\akey)}}      \\
\spec{\ELSE}\+\\
\spec{x \asg select\ in\ \aset;}\\
\spec{\aset \asg \aset - x;}\\
\spec{\NODE\ \aT \asg \NEW\ \NODE(x);}\\

\reg{\ttIFC{\akey < \aT.val}} \+\\
   \reg{\tinsrt{\aT.\l, \akey}}\\
   \spec{\ctr{\aT.\l, \set{ y \stt y \in \aset \land y < x}};}\\
   \spec{\ctr{\aT.r, \set{ y \stt y \in \aset \land y > x}};}\-\\

\reg{\ttELSE} \+     \`//$\akey > \art.val$\\
   \spec{\ctr{\aT.\l, \set{ y \stt y \in \aset \land y < x}};}\\
   \tinsrt{\aT.r, \akey;}\\
   \spec{\ctr{\aT.r, \set{ y \stt y \in \aset \land y > x}}}
\end{tabbing}

\noindent
I now apply the inductive hypothesis to conclude that

   \reg{\tinsrt{\aT.\l, \akey}};
   \spec{\ctr{\aT.\l, \set{ y \stt y \in \aset \land y < x}};}
   $\ev$
   \spec{\cti{\aT.\l, \set{ y \stt y \in \aset \land y < x}, \akey};}

\noindent
and that 

   \tinsrt{\aT.r, \akey};
   \spec{\ctr{\aT.r, \set{ y \stt y \in \aset \land y > x}}}
   $\ev$
   \spec{\cti{\aT.r, \set{ y \stt y \in \aset \land y > x}, \akey};}

\noindent
Making these substitutions results in:

\begin{tabbing}
m\=mmm\=mmm\= \+\kill  
\spec{\IFC{\aset = \emptyset}}   \reg{\aT \asg {\ttNEW\ \NODE(\akey)}}      \\
\spec{\ELSE}\+\\
\spec{x \asg select\ in\ \aset;}\\
\spec{\aset \asg \aset - x;}\\
\spec{\NODE\ \aT \asg \NEW\ \NODE(x);}\\

\reg{\ttIFC{\akey < \aT.val}} \+\\
   \spec{\cti{\aT.\l, \set{ y \stt y \in \aset \land y < x}, \akey};}\\
   \spec{\ctr{\aT.r, \set{ y \stt y \in \aset \land y > x}};}\-\\

\reg{\ttELSE} \+     \`//$\akey > \art.val$\\
   \spec{\ctr{\aT.\l, \set{ y \stt y \in \aset \land y < x}};}\\
   \spec{\cti{\aT.r, \set{ y \stt y \in \aset \land y > x}, \akey};}
\end{tabbing}

\noindent
and the above is seen to be the procedure body corresponding to the
call  \spec{\cti{\aT, \aset, \akey}}.

The above was formed by starting with $\spec{\ctr{\aT, \aset}}; \reg{\tinsrt{\aT, \akey}}$,
replacing the calls by the procedure bodies, and then performing a sequence of
equivalence-preserving transformations. Hence I conclude
$\spec{\ctr{\aT, \aset}}; \reg{\tinsrt{\aT, \akey}} \ev \spec{\cti{\aT, \aset, \akey}}$,
which is (c) above. This completes the proof.
For clarity, I have retained the formatting of pre-/post-program code in
red italics and of program code in black typewriter, even as I was
mixing these to establish the above equivalence.


\section{Related Work}

The use of assertions to verify programs was introduced by Floyd \cite{Fl67}
and Hoare \cite{Ho69}: a precondition $f$ expresses what can be assumed to
hold before execution of a program $P$, and a postcondition $g$ expresses what must hold
after the statement. The ``Hoare triple'' \htp{f}{P}{g} thus states
that if $f$ holds when execution of $P$ starts, then $g$ will hold
upon termination of $P$. If termination is not assumed, this is known
as \emph{partial correctness}, and if termination is assumed we have
\emph{total correctness}.
Both precondition and postcondition are
expressed as a formula of a suitable logic, \eg first order  logic.

Subsequently, Dijkstra introduced the \emph{weakest precondition
  predicate transformer} \cite{Dij75}: $wp(P,g)$ is the weakest
predicate $f$ whose truth before execution of $P$ guarantees $g$ afterwards, if $P$
terminates. He then used weakest preconditions to define a method for
formally deriving a program from a specification, expressed as a
precondition-postcondition pair.
Later, Hoare observed that the Hoare triple can be
expressed operationally, when he wrote ``$\htp{p}{q}{r} \df p;q < r$''
in \cite{Ho14}, but he does not seem to have developed this
observation into a proof system.

The formalization of both specifications and program correctness has lead
to a rich and extensive literature on program verification and
refinement. Hoare's original rules \cite{Ho69} were extended to deal
with  non-determinism, fair selection, and procedures \cite{Fa92}.
Separation logic \cite{Rey02} was introduced to deal with
pointer-based structures.

A large body of work deals with the notion of \emph{program
  refinement} \cite{BW98,Mo94}:
start with an initial artifact, which serves as a
specification, and gradually refine it into an executable and
efficient program.
This proceeds incrementally, in a sequence of
refinement steps, each of which preserves a ``refinement ordering'' relation $\PRef$, so
that we have $P_0 \PRef \ldots \PRef P_n$, where $P_0$ is the initial
specification and $P_n$ is the final program.
Morgan \cite{Mo94} starts with a pre-condition/post-condition
specification and refines it into an executable program using rules
that are similar in spirit to Dijkstra's weakest preconditions \cite{Dij75}.
Back and Wright \cite{BW98} use \emph{contracts}, which consist of
assertions (failure to hold causes a breach of the contract),
assumptions (failure to hold causes vacuous satisfaction of the
contract), and executable code.
As such, contracts subsume both pre-condition/post-condition pairs and
executable programs,
and so serve as an artifact for the seamless refinement of a 
pre-condition/post-condition specification into a program.


A related development has been the application of monads to
programming \cite{Mo89,Mo91}. A monad is an endofunctor $T$ over a
category $C$ together with a unit natural transformation from $1_C$
(the identity functor over $C$) to
$T$ and a multiplication natural transformation from $T^2$ to $T$. 
The Hoare state monad contains Hoare triples (precondition, program,
postcondition) \cite{J15},
and a computation maps an initial state to a pair consisting of a final
state and a returned value. The unit is the monadic operation
\textsf{return}, which 
lifts returned values into the state monad, and the multiplication is
the monadic operation
\textsf{bind}, which composes two computations, passing the resulting state
and returned value of the first computation to the second \cite{Sw09}.
The Dijkstra monad captures functions from postconditions
to preconditions \cite{J15,SHK16}. The \textsf{return} operation gives
the weakest precondition of a pure computation, and the 
\textsf{bind} operation gives the weakest precondition for a
composition of two computations.
 
Hoare logic and weakest preconditions are purely assertional proof methods.
Monads combine operational and assertional techniques, since they
provide operations which return the assertions that are used in the
correctness proofs. My approach is purely operational, since it uses
no assertions (formula in a suitable logic) but rather pre- and
post-programs instead. My approach thus represents the operational
endpont of the assertional--operational continuum, with Hoare
logic/weakest preconditions at the other (assertional) endpoint, and monads
somewhere in between.


\section{Conclusions}

I have presented a new method for verifying the correctness of sequential deterministic programs.
The method does not use assertions to specify correctness properties, but rather ``specification programs'',
which define a set of ``post states'', and can thus replace an assertion, which also defines a set of states, namely the states that satisfy it.
Since specification programs are not executed, they can be inefficient, and can refer to any mathematically well-defined quantity, \eg shortest path distances in a directed graph. In general, any formula of first/higher order logic can be referenced.

I illustrated my method on three examples: selection sort, Dijkstra's shortest path algorithm, and in-place list-reversal.
My approach has the following advantages, as illustrated by the examples:

\be

\item \emph{Code synthesis}: unwinding the outer loop of the pre-program and then trading it into the program can give initial code for a loop body of the program. This technique was illustrated in all three examples.

\item \emph{Trading}: trading gives great flexibility in developing both the program and the pre-program, as code can be freely moved between the
  program and the pre-program. This provides a tactic which is not available in logic-based verification methods such as
  Floyd-Hoare logic \cite{Fl67,Ho69} and  separation logic \cite{Rey02}.
  Trading was used in the sorting and list reversal examples.

\item \emph{Separation effect}: in the in-place list reversal example, a key requirement is that the reversed part of the list does not link
  around back to the non-reversed part. This requirement is easily expressed in my framework by the pre-program;
  in particular by how the pre-program constructs the list and then reverses part of it. Hence, using a pre-program which expresses, in an operational manner, the needed separation in pointer-based data structures, achieves the same effect as logic-based methods such as 
  separation logic \cite{Rey02}.

\item \emph{Refinement}: use a ``coarse'' and inefficient specification program and derive a more efficient program. Now iterate by using this program as a specification program to derive a still more efficient program, etc. My approach thus accommodates multi-level refinement.

\item \emph{Practical application}: it may be easier for developers to write specifications in code, a formalism that they are already well familiar with.

\ee


My approach is, to my knowledge, the first which uses \emph{purely} operational specifications to verify the correctness of sequential programs, as opposed to pre- and post-conditions and invariants in a logic such as Floyd-Hoare logic and separation logic, or axioms and signatures in
algebraic specifications \cite{Wir90}. The use of operational specifications is of course well-established
for the specification and verification of concurrent programs. The process-algebra approach \cite{CSP,CCS,PiCalc} starts with a specification written
in a process algebra formalism such as CSP, CCS, or the Pi-calculus, and then refines it into an implementation. Equivalence of the implementation and specification is established by showing a bisimulation \cite{CCS,Bisimulations} between the two. The I/O Automata approach \cite{IOA}
starts with a specification given as a single ``global property automaton'' and shows that a distributed/concurrent implementation respects the global property automaton by establishing a simulation relation \cite{Simulations} from the implementation to the specification.

My approach requires, in some cases, that one establish the equivalence of two programs \cite{CLRR16,PBD00,LR15}.  Any method for showing program equivalence works, since the equivalence is simply the hypothesis for the \eqSubRule rule.
The equivalence proofs in
this paper were informal and ``by inspection'', based on concepts such as the
commutativity of assignment statements that modify different variables/objects
%

Future work includes more examples and case studies, and in particular examples with pointer-based data structures.
I am also extending the operational annotations approach to the verification of concurrent programs.



\clearpage
\bibliographystyle{plain}
\bibliography{refs}

\end{document}